%% LaTeX Template for ISIT 2025
%%
%% by Stefan M. Moser, October 2017
%% (with minor modifications by Tobias Koch, November 2023 and Michèle Wigger, November 2024)
%% 
%% derived from bare_conf.tex, V1.4a, 2014/09/17, by Michael Shell
%% for use with IEEEtran.cls version 1.8b or later
%%
%% Support sites for IEEEtran.cls:
%%
%% http://www.michaelshell.org/tex/ieeetran/
%% http://moser-isi.ethz.ch/manuals.html#eqlatex
%% http://www.ctan.org/tex-archive/macros/latex/contrib/IEEEtran/
%%

\documentclass[conference,letterpaper]{IEEEtran}

%% depending on your installation, you may wish to adjust the top margin:
\addtolength{\topmargin}{9mm}

%%%%%%
%% Packages:
%% Some useful packages (and compatibility issues with the IEEE format)
%% are pointed out at the very end of this template source file (they are 
%% taken verbatim out of bare_conf.tex by Michael Shell).
%
% *** Do not adjust lengths that control margins, column widths, etc. ***
% *** Do not use packages that alter fonts (such as pslatex).         ***
%
\usepackage[utf8]{inputenc} 
\usepackage[T1]{fontenc}
\usepackage{url}
\usepackage{ifthen}
\usepackage{cite}
\usepackage[cmex10]{amsmath} % Use the [cmex10] option to ensure complicance
                             % with IEEE Xplore (see bare_conf.tex)

%% Please note that the amsthm package must not be loaded with
%% IEEEtran.cls because IEEEtran provides its own versions of
%% theorems. Also note that IEEEXplore does not accepts submissions
%% with hyperlinks, i.e., hyperref cannot be used.

\usepackage{amssymb,xcolor,hyperref}
\usepackage{amsthm}%
\usepackage{mathrsfs}%
\usepackage{bm,hyperref} %
\usepackage[ruled, lined, linesnumbered, commentsnumbered, longend]{algorithm2e}

\newtheorem{thm}{Theorem} 
\newtheorem{lem}{Lemma}
\newtheorem{cor}{Corollary}
\newtheorem{prop}{Proposition}

\newtheorem{defn}{Definition}

\newcommand{\NS}{\hat{N}}

\interdisplaylinepenalty=2500 % As explained in bare_conf.tex

%%%%%%
% correct bad hyphenation here
\hyphenation{op-tical net-works semi-conduc-tor}

% ------------------------------------------------------------
\begin{document}
\title{Ratio Attack on G+G Convoluted Gaussian Signature} 

% %%% Single author, or several authors with same affiliation:
% \author{%
%  \IEEEauthorblockN{Author 1 and Author 2}
% \IEEEauthorblockA{Department of Statistics and Data Science\\
%                    University 1\\
 %                   City 1\\
  %                  Email: author1@university1.edu}% }

%%% Several authors with up to three affiliations:
\author{%
  \IEEEauthorblockN{Chik How Tan,  Theo Fanuela Prabowo and Wei Guo Foo}
  \IEEEauthorblockA{Temasek Laboratories, National University of Singapore\\
                    5A Engineering Drive 1, \#09-02, Singapore 117411\\
                    Email: \{tsltch, tsltfp, fwg\}@nus.edu.sg} }
%  \and
%  \IEEEauthorblockN{Author 2 and Author 3}
%  \IEEEauthorblockA{Research Center XY\\ 
%                    City 2\\
%                    Email: \{author2, author3\}@research-center.com}
%}

\maketitle

%%%%%%
%% Abstract: 
%% If your paper is eligible for the student paper award, please add
%% the comment "THIS PAPER IS ELIGIBLE FOR THE STUDENT PAPER
%% AWARD." as a first line in the abstract. 
%% For the final version of the accepted paper, please do not forget
%% to remove this comment!
%%

\begin{abstract}
   A lattice-based signature, called G+G convoluted Gaussian signature \cite{DPS23} was proposed in ASIACRYPT 2023 and was proved secure in the quantum random oracle model. In this paper, we propose a ratio attack on the G+G convoluted Gaussian signature \cite{DPS23} to recover the secret key and comment on the revised eprint paper \cite{DPS23e}.  The attack exploits the fact, proved in this paper, that the secret key can be obtained from the expected value of the ratio of signatures which follows a truncated Cauchy distribution. Moreover, we also compute the number of signatures required to successfully recover the secret key. Furthermore, we simulate the ratio attack in Sagemath with a few different parameters as a proof-of-concept of the ratio attack. In addition, although the revised signature in the revised eprint paper \cite{DPS23e} is secure against the ratio attack, we found that either a valid signature cannot be produced or a signature can be forged easily for their given parameters in \cite{DPS23e}.
\end{abstract}

\section{Introduction}

Recently, NIST has standardized a lattice-based signature called ML-DSA \cite{ML-DSA24} which was the Dilithium signature submitted to NIST PQC competition. Dilithium signature is based on Fiat-Shamir transformation \cite{FS86} and adapted the technique from Lyubashevsky's signature scheme \cite{Lyu09,Lyu12} with rejection sampling or abort. 
The rejection sampling or abort means that the signature will be rejected if some conditions are not satisfied and the generation of a new signature is repeated until the conditions are satisfied.  A number of lattice-based signatures are constructed using this approach, such as Dilithium \cite{DKL18}, qTESLA \cite{ABB20}, etc. 

Since the original lattice-based signature based on Fiat-Shamir transformation is subjected to statistical attack, the rejection sampling or abort technique was proposed by Lyubashevsky \cite{Lyu09,Lyu12} to defend against the statistical attack.  The security of such Dilithium-like signature schemes has been analyzed in \cite{PT24}. The paper \cite{PT24} determines which parameters are subjected to the attack and which ones are not.  Recently,  Devevey et al. \cite{DPS23} proposed a lattice-based signature called the G+G convoluted Gaussian signature without rejection sampling or abort.  The G+G convoluted Gaussian signature \cite{DPS23} is secure against the known statistical attack as %it {\color{red}hides the known hash value $\bm c$ in the signature}.  
the ephemeral key is sampled dependent on the known hash value $\bm c$ in a way that hides information in the signature.
In this paper, we propose a new attack on the G+G convoluted Gaussian signature, called ratio attack.  The ratio attack explores the correlation among the signatures and takes the average of the ratio of two signatures. Furthermore, the signatures follow a normal distribution and the ratio of two normal distributions is a Cauchy distribution. However, Cauchy distribution does not have finite expected value and standard deviation.  By exploring the truncated Cauchy distribution for which the Central Limit Theorem does apply, we can find a formula relating the average of the ratio of two signatures and the secret key.  Therefore, we can successfully recover the secret key using the ratio attack.

%\medskip

The rest of the paper is organized as follows. In Section \ref{SectPrelim}, we setup some notations and provide some statistical results relevant to our attack, for example, multivariate Gaussian distribution, Cauchy distribution and truncated Cauchy distribution, etc.  In Section \ref{G+G}, we review the G+G convoluted Gaussian signature \cite{DPS23}. In Section \ref{SecRatioAttack}, we introduce the ratio attack and prove a formula relating the expected value of the ratio of two signatures and the secret key.  We also provide an approximation of the required number of signatures to successfully recover the secret key.  In section \ref{SectAttack}, we give more details on the ratio attack on the G+G convoluted Gaussian signature and implement the ratio attack in Sagemath   on some scaled-down parameters as a proof-of-concept. 
In section \ref{Sect_com_DPS23e}, we examine and comment on the revised signature in the revised eprint paper \cite{DPS23e}.  Finally, we give some concluding remarks in Section \ref{SectConclusion}.

\section{Preliminaries}\label{SectPrelim}

\subsection{Notations}

Let $q$ be an odd prime and $n$ be a power of two.
Let $\mathbb{Z}_q = \mathbb{Z}/q\mathbb{Z}$ denote the quotient ring of integers modulo $q$, which we represent as
%$\mathbb{Z}_q:= \{-\frac{q-1}{2}, \ldots, -1,0,1,\ldots, \frac{q-1}{2}\}$.
$\mathbb{Z}_q= \{-\frac{q-1}{2}, \ldots,  \frac{q-1}{2}\}$.
Let $\mathcal{R}$, $\mathcal{R}_q$, $\mathcal{R}_{2q}$ denote the rings $\mathbb{Z}[x]/(x^n + 1)$, $\mathbb{Z}_q[x]/(x^n + 1)$, and $\mathbb{Z}_{2q}[x]/(x^n+1)$ respectively.
For $\bm a = \sum_{i=0}^{n-1} a_i x^i \in \mathcal{R}$, we define $\| \bm a \|_{\infty}:= \max \{|a_i| \mid 0 \le i \le n-1\}$ and $\|\bm a \|:= \sqrt{\sum_{i=0}^{n-1}a_i^2}$.

%\medskip

\begin{defn}[Skew Circulant Matrix] \label{cirmat}
   For ${\bf v}=(v_0, \ldots, v_{n-1}) \in \mathbb{Z}^n$, the skew circulant matrix defined by $\mathbf{v}$ is  
\[ V:= \begin{bmatrix} v_0 & -v_{n-1} & \ldots & -v_1 \\ v_1 & v_0 & \ldots & -v_2 \\ \vdots & \vdots & \ddots & \vdots \\ v_{n-1} & v_{n-2} & \ldots & v_0 \end{bmatrix} \in \mathbb{Z}^{n \times n}. \]
\end{defn}

%\medskip

For $\bm u = \sum_{i=0}^{n-1} u_i x^i,\bm v = \sum_{j=0}^{n-1} v_j x^j \in \mathcal{R}$, the product $\sum_{l=0}^{n-1}w_lx^l = \bm{ w}=\bm{uv}$ can be computed as $\mathbf{ w} = {\bf u}V^T$, where $\mathbf{u}=(u_0, \ldots, u_{n-1}), \mathbf{v}=(v_0, \ldots, v_{n-1})$, and $ \mathbf{w}=(w_0, \ldots, w_{n-1})$.
For $l=0, \ldots, n-1$, we have
$$w_l = \sum_{i+j=l \bmod n} \epsilon_{i,j}   u_iv_j, \quad \text{where} \ \epsilon_{i,j} := \begin{cases} 
                      1 & \mbox{if }i+j < n, \\
                      -1 & \mbox{if }i+j \ge n.
                    \end{cases}$$

\subsection{Some Statistical Results}

Let $U$ be a random variable. We denote its expected value and variance by $\mathbb{E}(U)$ and $\mathbb{V}(U)$ respectively. %If $U$ has a finite possible outcomes $u_1, u_2, \ldots, u_k$ which appear with probability $p_1, p_2, \ldots, p_k$ respectively, then $\mathbb{E}(U):= \sum_{i=1}^k u_i p_i$.

%\medskip

\begin{lem}[{\cite{TP22}}]  \label{sum_prod_dist}
  Let $U, V$ be random variables with mean $\mathbb{E}(U), \mathbb{E}(V)$ and variance $\mathbb{V}(U), \mathbb{V}(V)$. Then, $\mathbb{E}(U \pm V)=\mathbb{E}(U) \pm \mathbb{E}(V)$. If moreover $U$ and $V$ are independent, then
 \begin{itemize}
     \item[{\rm (a)}] $\mathbb{V}(U \pm V)=\mathbb{V}(U) + \mathbb{V}(V)$,
     \item[{\rm (b)}] $\mathbb{E}(UV)=\mathbb{E}(U)\mathbb{E}(V)$ and $\mathbb{V}(UV)=(\mathbb{V}(U) + \mathbb{E}(U)^2) \times (\mathbb{V}(V) + \mathbb{E}(V)^2)-\mathbb{E}(U)^2\mathbb{E}(V)^2$.
 \end{itemize}
\end{lem}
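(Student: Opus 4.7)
The plan is to treat the three claims in order, since each reduces to a short computation from the definitions together with the integral/sum characterization of independence. For the linearity statement, I would simply appeal to the integral form $\mathbb{E}(X)=\int x\,dP_X$ and the fact that integration is linear; the joint distribution of $(U,V)$ suffices here and no independence is needed. This yields $\mathbb{E}(U\pm V)=\mathbb{E}(U)\pm\mathbb{E}(V)$ directly.

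For part (a), I would start from the identity $\mathbb{V}(W)=\mathbb{E}(W^2)-\mathbb{E}(W)^2$ with $W=U\pm V$. Expanding gives
\begin{equation*}
  \mathbb{V}(U\pm V)=\mathbb{E}(U^2)+\mathbb{E}(V^2)\pm 2\mathbb{E}(UV)-\bigl(\mathbb{E}(U)\pm\mathbb{E}(V)\bigr)^2.
\end{equation*}
The cross term $\pm 2\mathbb{E}(UV)$ exactly cancels the cross term $\pm 2\mathbb{E}(U)\mathbb{E}(V)$ coming from squaring the sum of means, provided we know $\mathbb{E}(UV)=\mathbb{E}(U)\mathbb{E}(V)$. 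That identity is what I would prove first in part (b): by independence, the joint law factors as $dP_{U,V}=dP_U\,dP_V$, and Fubini's theorem gives $\mathbb{E}(UV)=\int\!\!\int uv\,dP_U\,dP_V=\mathbb{E}(U)\mathbb{E}(V)$. Plugging this back into the expansion collapses everything to $\mathbb{E}(U^2)-\mathbb{E}(U)^2+\mathbb{E}(V^2)-\mathbb{E}(V)^2=\mathbb{V}(U)+\mathbb{V}(V)$.

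For the variance of the product in part (b), I would reuse the same trick: write $\mathbb{V}(UV)=\mathbb{E}((UV)^2)-\mathbb{E}(UV)^2=\mathbb{E}(U^2V^2)-\mathbb{E}(U)^2\mathbb{E}(V)^2$. Since independence of $U,V$ implies independence of $U^2,V^2$ (measurable functions of independent variables are independent), Fubini again gives $\mathbb{E}(U^2V^2)=\mathbb{E}(U^2)\mathbb{E}(V^2)$. Substituting $\mathbb{E}(U^2)=\mathbb{V}(U)+\mathbb{E}(U)^2$ and similarly for $V$ produces the stated formula $(\mathbb{V}(U)+\mathbb{E}(U)^2)(\mathbb{V}(V)+\mathbb{E}(V)^2)-\mathbb{E}(U)^2\mathbb{E}(V)^2$.

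There is no real obstacle in the proof; the only ingredient beyond algebra is the factorization of expectations under independence, which is a standard consequence of Fubini's theorem on the product measure $P_U\otimes P_V$. If one wanted to avoid measure-theoretic language, the discrete or density-based version of the same argument works identically. The whole lemma is therefore a short exercise, and the main care is simply to track signs and to ensure that the cross terms from expanding $(U\pm V)^2$ and $(\mathbb{E}(U)\pm\mathbb{E}(V))^2$ cancel correctly in part (a).
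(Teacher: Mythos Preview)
Your argument is correct and entirely standard. Note, however, that the paper does not actually prove this lemma: it is stated with a citation to \cite{TP22} and no proof is given, so there is nothing to compare against beyond observing that your derivation is the expected textbook one.
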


%For a set $S$, we denote the uniform distribution on $S$ by $\mathcal{U}(S)$.

%\medskip

\begin{lem}[{\cite{TP22}}] \label{uniform_dist}
   Let $b$ be a positive integer. Then, the mean and the variance of $\mathcal{U}([-b,b]\cap \mathbb{Z})$ (i.e. the uniform distribution on $[-b,b] \cap \mathbb{Z}$) are $0$ and $\frac{b(b+1)}{3}$ respectively. 
\end{lem}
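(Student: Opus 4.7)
The plan is a direct moment computation from the definition, since $\mathcal{U}([-b,b]\cap\mathbb{Z})$ is the discrete uniform distribution supported on the $2b+1$ integers $-b,-b+1,\ldots,b-1,b$, each with probability $\frac{1}{2b+1}$.

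First I would handle the mean. Writing $\mathbb{E}(X)=\frac{1}{2b+1}\sum_{k=-b}^{b}k$, I would exploit the symmetry of the support about $0$: the bijection $k\mapsto -k$ on $[-b,b]\cap\mathbb{Z}$ pairs each term with its negative, so the sum collapses to $0$ and hence $\mathbb{E}(X)=0$.

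Next I would compute the variance. Because $\mathbb{E}(X)=0$, one has $\mathbb{V}(X)=\mathbb{E}(X^2)=\frac{1}{2b+1}\sum_{k=-b}^{b}k^2$. Using that $k^2$ is even in $k$ and that the $k=0$ term contributes nothing, this becomes $\frac{2}{2b+1}\sum_{k=1}^{b}k^2$. Plugging in the standard identity $\sum_{k=1}^{b}k^2=\frac{b(b+1)(2b+1)}{6}$, the factor $2b+1$ cancels and I obtain $\frac{b(b+1)}{3}$, as claimed.

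There is no real obstacle: the proof is a two-line textbook calculation whose only subtleties are counting the support correctly (note $2b+1$ points, not $2b$) and applying the closed form for $\sum k^2$. I would present the mean and variance computations in succession and conclude.
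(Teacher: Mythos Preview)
Your argument is correct: the symmetry gives mean $0$, and the sum-of-squares identity yields the variance $\tfrac{b(b+1)}{3}$. The paper does not actually prove this lemma---it is quoted from \cite{TP22} without proof---so there is nothing to compare against; your direct computation is the standard one.
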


%\medskip

The (univariate) normal/Gaussian distribution $\mathcal{N}(\mu,\sigma^2)$ with mean $\mu$ and standard deviation $\sigma$ has probability density function given by $\rho_{\sigma}(t):=\frac{1}{\sqrt{2\pi \sigma^2}} e^{-\frac{(t-\mu)^2}{2\sigma^2}}$ for $t \in \mathbb{R}$. If $\mu=0$ and $\sigma=1$, the resulting distribution $\mathcal{N}(0,1)$ is called the standard normal distribution.

%\medskip

\begin{thm}[{\cite[Theorem 2.23]{Pan16}} (Central Limit Theorem)] \label{CLT}
  Let $U_1, U_2,$ $ \ldots,U_{\bar{n}}$ be  independent and identically distributed random variables %such that $\mathbb{E}(U_i)=\mu$ and $\mathbb{V}(U_i)=\sigma^2$. 
  with mean $\mu$ and standard deviation $\sigma$.
  Let $\overline{U}:=\frac{1}{\bar{n}}\sum_{i=1}^{\bar{n}} U_i$. Then $\sqrt{\bar{n}}(\overline{U}-\mu) $ approximates to the normal distribution $\mathcal{N}(0,\sigma^2)$ with mean $0$ and standard deviation $\sigma$, that is,
\[ \lim_{\bar{n} \rightarrow \infty} \Pr\left(\frac{\overline{U}-\mu}{\sigma / \sqrt{\bar{n}}} \le \omega \right) = \Phi(\omega),  \] 
 where $\Phi(\omega):=\displaystyle\frac{1}{\sqrt{2\pi}} \int_{-\infty}^{ \omega} e^{-t^2/2} {\rm d}t.$   
\end{thm}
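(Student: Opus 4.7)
The plan is to prove the statement along the standard characteristic-function route, invoking L\'evy's continuity theorem. First I would reduce to the standardized setting by defining $V_i := (U_i - \mu)/\sigma$, so that each $V_i$ is i.i.d.\ with mean $0$ and variance $1$. Since
\[ \frac{\overline{U} - \mu}{\sigma/\sqrt{\bar{n}}} \;=\; \frac{1}{\sqrt{\bar{n}}} \sum_{i=1}^{\bar{n}} V_i \;=:\; S_{\bar{n}}, \]
it is enough to show that the CDF of $S_{\bar{n}}$ converges pointwise to $\Phi$. Because $\Phi$ is continuous on all of $\mathbb{R}$, convergence in distribution of $S_{\bar{n}}$ to $\mathcal{N}(0,1)$ is equivalent to the desired pointwise convergence.

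Second, I would introduce the characteristic function $\phi(t) := \mathbb{E}(e^{itV_1})$ of a single standardized summand. Since $V_1$ has finite second moment, a Taylor expansion at the origin, using $\mathbb{E}(V_1)=0$ and $\mathbb{E}(V_1^2)=1$, yields
\[ \phi(t) \;=\; 1 - \frac{t^2}{2} + o(t^2) \qquad \text{as } t\to 0. \]
By independence of the $V_i$ together with Lemma~\ref{sum_prod_dist}-type multiplicativity of expectations, the characteristic function of $S_{\bar{n}}$ factorizes as $\phi(t/\sqrt{\bar{n}})^{\bar{n}}$. Substituting the expansion gives
\[ \mathbb{E}\!\left(e^{itS_{\bar{n}}}\right) \;=\; \left(1 - \frac{t^2}{2\bar{n}} + o\!\left(\tfrac{1}{\bar{n}}\right)\right)^{\bar{n}} \;\longrightarrow\; e^{-t^2/2} \quad (\bar{n}\to\infty), \]
for every fixed $t \in \mathbb{R}$. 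Since $e^{-t^2/2}$ is precisely the characteristic function of $\mathcal{N}(0,1)$, L\'evy's continuity theorem delivers convergence in distribution, which by the continuity of $\Phi$ is equivalent to the stated pointwise CDF convergence.

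The main obstacle is making the displayed limit fully rigorous. Because $\phi(t/\sqrt{\bar{n}})$ is complex-valued and the $o(1/\bar{n})$ remainder is nontrivial, one cannot just take a complex logarithm na\"ively; the cleanest way I know is the elementary inequality $|z_1^{\bar{n}} - z_2^{\bar{n}}| \le \bar{n}\,|z_1-z_2|$ for $z_1,z_2$ in the closed complex unit disk, applied with $z_1 = \phi(t/\sqrt{\bar{n}})$ and $z_2 = 1 - t^2/(2\bar{n})$, combined with the elementary real limit $(1 - t^2/(2\bar{n}))^{\bar{n}} \to e^{-t^2/2}$. Everything else (reduction to the standardized case, the Taylor expansion, the factorization by independence) is bookkeeping. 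An alternative route that avoids characteristic functions entirely is Lindeberg's replacement method, which is more hands-on but substantially longer and not needed here, since the i.i.d.\ finite-variance hypothesis matches the characteristic-function approach perfectly.
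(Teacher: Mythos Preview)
Your argument is a correct and standard proof of the classical i.i.d.\ Central Limit Theorem via characteristic functions and L\'evy's continuity theorem; the handling of the complex $\bar{n}$th power through the inequality $|z_1^{\bar{n}} - z_2^{\bar{n}}| \le \bar{n}\,|z_1 - z_2|$ on the unit disk is exactly the usual way to make the limit rigorous. However, there is nothing in the paper to compare it against: Theorem~\ref{CLT} is not proved in the paper at all, but simply quoted from \cite[Theorem~2.23]{Pan16} as a black-box background result. So your proposal is not so much a different route as an actual proof where the paper gives none.
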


\begin{thm}[{\cite[Theorem 2]{PT24}}] \label{ThmNSample}
  Let $U_1,U_2,\ldots, U_{\NS}$ be independent and identically distributed random variables with mean $\mu$ and standard deviation $\sigma$. Let $\overline{U}:=\frac{1}{\NS} \sum_{i=1}^{\NS} U_i$ and $d>0$. Then, %$\overline{U}-\mu \thicksim \mathcal{N}\left(0,\sigma^2/\NS\right)$. Moreover, 
  the required number $\NS$ of samples such that $|\overline{U}-\mu|\le d$ with probability $\Phi(\omega)-\Phi(-\omega)$ is $\NS=  \left(\frac{\omega \sigma}{d}\right)^2$,
  where $\Phi(\omega):=\displaystyle\frac{1}{\sqrt{2\pi}} \int_{-\infty}^{ \omega} e^{-t^2/2} {\rm d}t$.
\end{thm}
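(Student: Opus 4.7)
The plan is to reduce the statement directly to the Central Limit Theorem (Theorem~\ref{CLT}) by a change of variables. Since the $U_i$ are i.i.d.\ with mean $\mu$ and standard deviation $\sigma$, Theorem~\ref{CLT} tells us that the standardized sample mean $Z:=\frac{\overline{U}-\mu}{\sigma/\sqrt{\NS}}$ is approximately distributed as the standard normal $\mathcal{N}(0,1)$, so that $\Pr(-\omega \le Z \le \omega) \approx \Phi(\omega)-\Phi(-\omega)$.

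First I would translate the event $|\overline{U}-\mu|\le d$ into a statement about $Z$. Dividing by $\sigma/\sqrt{\NS}$ on both sides of the inequality gives the equivalent event $|Z| \le \frac{d\sqrt{\NS}}{\sigma}$. Hence
\[
\Pr\bigl(|\overline{U}-\mu|\le d\bigr) \;\approx\; \Phi\!\left(\tfrac{d\sqrt{\NS}}{\sigma}\right)-\Phi\!\left(-\tfrac{d\sqrt{\NS}}{\sigma}\right).
\]
To achieve the prescribed confidence level $\Phi(\omega)-\Phi(-\omega)$, I would match the argument of $\Phi$ on the right-hand side with $\omega$, i.e.\ impose $\frac{d\sqrt{\NS}}{\sigma}=\omega$. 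Since $\Phi$ is strictly increasing, this matching condition is both necessary and sufficient. Solving for $\NS$ yields $\sqrt{\NS}=\frac{\omega\sigma}{d}$, and therefore $\NS=\left(\frac{\omega\sigma}{d}\right)^2$, which is exactly the claimed formula.

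There is no real obstacle here: the argument is a one-line consequence of the CLT together with the monotonicity of $\Phi$. The only subtlety worth noting is that the equality in Theorem~\ref{CLT} is a limiting statement, so strictly speaking the formula gives the asymptotically required sample size; for finite $\NS$ it is an approximation. I would mention this briefly but not dwell on it, since the rest of the paper treats the formula in exactly this asymptotic sense.
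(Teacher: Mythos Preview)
Your argument is correct and is the natural one: standardize $\overline{U}$ via the CLT, rewrite $|\overline{U}-\mu|\le d$ as $|Z|\le d\sqrt{\NS}/\sigma$, and match against $\omega$ using the monotonicity of $\Phi$. Note that the paper does not actually prove this theorem; it is quoted from \cite[Theorem~2]{PT24} without proof, so there is no in-paper argument to compare against, but your derivation is exactly the standard one underlying such sample-size formulas.
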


% \begin{proof}
% By the Central Limit Theorem (Theorem \ref{CLT}), we have $\overline{U}-\mu=\frac{1}{\NS} \sum_{i=1}^{\NS} U_i -\mu \thicksim \mathcal{N}\left(0,\sigma^2/\NS\right)$ and
% $\lim_{\NS \rightarrow \infty} \Pr\left(\frac{\overline{U}-\mu}{\sigma / \sqrt{\NS}} \le Z \right) = \Phi(Z). $
% It then follows that $\lim_{\NS \to \infty} \Pr\left( \left| \frac{\overline{U}-\mu}{\sigma / \sqrt{\NS}} \right| \le Z \right) = \Phi(Z)-\Phi(-Z).$

% For large enough $\NS$, we have  
% $\Pr\left( \left| \frac{\overline{U}-\mu}{\sigma / \sqrt{\NS}} \right| \le Z \right) =\Phi(Z)-\Phi(-Z)$.
% Equivalently, $\Pr\left(|\overline{U}-\mu| \le \frac{Z \sigma}{\sqrt{\NS}} \right) =\Phi(Z)-\Phi(-Z) $
% Then, we may set $\frac{Z\sigma}{\sqrt{\NS}} = d$. Consequently,
% $\NS=\left( \frac{Z\sigma}{d} \right)^2.$
% \end{proof}

%\medskip

We list some values of $\omega$ with the corresponding probability $(\Phi(\omega)-\Phi(-\omega))$ in the following Table \ref{ZProb}.

%\vspace{-0.2cm}
\begin{table}[h] 
\caption{Some values of $\omega$ with their corresponding probabilities $\Phi(\omega)- \Phi(-\omega)$}  \label{ZProb}
\centering
{ \scriptsize
\begin{tabular}{|c||c|c|c|c|c|c|c|c|} \hline
 $\omega$  & 1.96 & 2.326 & 2.576  & 2.807 & 3.090 & 3.2905 & 3.8905 & 4.4171 \\ \hline
  Prob. & 0.95 & 0.98  & 0.99   & 0.995 & 0.998 & 0.999 & 0.9999 & 0.99999\\ \hline
\end{tabular}
}
\end{table}
%\vspace{-0.7cm}

%\medskip

\begin{lem}[{\cite{PT24}}] \label{prod-uvw}
Let $\bm{ u}, \bm{v} \in \mathcal{R}$ and suppose each coordinates $u_i, v_i$ of ${\bm u}$ and $\bm v$ are independently distributed random variables  with  mean $\mu_u=\mu_v=0$ and variance $\sigma_{u}^2, \sigma_{v}^2$ respectively.  Then, each coordinate of $\bm{uv}$ approximates to $\mathcal{N}(0,n  \sigma_{u}^2\sigma_{v}^2)$. 
\end{lem}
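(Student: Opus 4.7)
The plan is to unpack the coordinate-wise product formula for $\bm{w}=\bm{uv}$ given earlier in the notation subsection and then invoke the Central Limit Theorem. Recall that for each $l \in \{0,\ldots,n-1\}$,
\[ w_l = \sum_{i+j \equiv l \!\!\!\pmod n} \epsilon_{i,j}\, u_i v_j, \]
where $\epsilon_{i,j}\in\{\pm 1\}$. As $i$ runs over $\{0,\ldots,n-1\}$, the index $j = l - i \bmod n$ also takes every value in $\{0,\ldots,n-1\}$ exactly once. Consequently the $n$ summands $T_i := \epsilon_{i,l-i\bmod n}\, u_i v_{l-i\bmod n}$ depend on pairwise disjoint subsets of the underlying independent variables $\{u_0,\ldots,u_{n-1},v_0,\ldots,v_{n-1}\}$. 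This is the first thing I would record, because it gives mutual independence of the $T_i$'s.

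Next, I would compute the mean and variance of each $T_i$ using Lemma \ref{sum_prod_dist}. Since $u_i$ and $v_{l-i\bmod n}$ are independent with $\mathbb{E}(u_i)=\mathbb{E}(v_j)=0$, part (b) gives $\mathbb{E}(u_i v_{l-i\bmod n})=0$ and
\[ \mathbb{V}(u_i v_{l-i\bmod n}) = (\sigma_u^2 + 0)(\sigma_v^2+0) - 0 = \sigma_u^2\sigma_v^2. \]
Multiplying by $\epsilon_{i,j}=\pm 1$ does not change the mean or the variance, so each $T_i$ has mean $0$ and variance $\sigma_u^2\sigma_v^2$. Because the $u_i$'s are identically distributed and the $v_j$'s are identically distributed, the $T_i$'s are also identically distributed; combined with the independence observation above, the $T_i$'s are i.i.d.

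Finally, I would apply the Central Limit Theorem (Theorem \ref{CLT}) to the sample mean $\overline{T}:=\frac{1}{n}\sum_{i=0}^{n-1} T_i$: $\sqrt{n}\,\overline{T}$ approximates $\mathcal{N}(0,\sigma_u^2\sigma_v^2)$. Rewriting $w_l = n\overline{T}$, this gives that $w_l$ approximates $\mathcal{N}(0, n\sigma_u^2\sigma_v^2)$, which is the desired conclusion.

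The only subtle point, and the one I would be most careful about, is the independence of the $T_i$'s. It relies crucially on the fact that the pairs $(i, l-i \bmod n)$ for $i=0,\ldots,n-1$ use each $u_i$ at most once and each $v_j$ at most once, which in turn uses that $\mathcal{R}$ is $\mathbb{Z}[x]/(x^n+1)$ (so the convolution is a bijection on indices). Beyond that, the argument is a direct combination of Lemma \ref{sum_prod_dist} and Theorem \ref{CLT}.
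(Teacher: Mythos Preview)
The paper does not include its own proof of this lemma; it is simply quoted from \cite{PT24}. Your argument is the natural one and is essentially correct: express $w_l$ as a sum of $n$ independent products via the skew-circulant formula, compute the mean and variance of each summand through Lemma~\ref{sum_prod_dist}(b), and then invoke Theorem~\ref{CLT}.

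One small caveat is worth recording. To apply Theorem~\ref{CLT} exactly as stated you need the $T_i$'s to be \emph{identically} distributed, not just independent with a common mean and variance. Because of the signs $\epsilon_{i,j}\in\{\pm 1\}$, the $T_i$'s are identically distributed only when $u_iv_j$ and $-u_iv_j$ have the same law, i.e., when at least one of the two coordinate distributions is symmetric about $0$. That is satisfied in every use made of the lemma in this paper (discrete Gaussians centered at $0$, or uniform on $[-b,b]\cap\mathbb{Z}$), so your argument goes through unchanged there; in full generality one would instead cite a Lindeberg--Feller CLT for independent but not identically distributed summands, which applies since all $T_i$ share the same variance $\sigma_u^2\sigma_v^2$.
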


%\begin{proof}
%Letting $\sum_{t=0}^{n-1}w_tx^t=\bm{uv}$, we have $w_t=\sum_{i=0}^{n-1} \pm u_iv_{(t-i) \bmod n}$.   By Lemma \ref{sum_prod_dist}(b), each $\pm u_iv_j$ follows a random variable with mean 0 and variance $\sigma_{u}^2(\sigma_{v}^2+\mu_v^2)$. By Lemma \ref{sum_prod_dist}(a), each $w_t$ approximates to $\mathcal{N}(0,n  \sigma_{u}^2(\sigma_{v}^2+\mu_v^2))$.
%\end{proof}

\begin{lem}[{\cite{CDS03}}]  \label{L12} %{\rm \cite{CDS03} }
  For $t>2$, $Z \sim {\mathcal{N}(0,\sigma^2)}$, then 
\[  {\rm Pr}[\, |z| > t \sigma \; | \; z \leftarrow Z \,] \le \frac{1}{2}(e^{-t^2} +e^{-\frac{t^2}{2}} ). \]
\end{lem}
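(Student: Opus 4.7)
The plan is to reduce the statement to a standard-Gaussian tail bound and then apply a Mills-ratio style estimate. First I would set $Y = Z/\sigma$, so that $Y \sim \mathcal{N}(0,1)$ and
\[ \Pr[\,|z|>t\sigma \mid z \leftarrow Z\,] \;=\; \Pr[|Y|>t] \;=\; 2\int_t^\infty \frac{1}{\sqrt{2\pi}}\, e^{-u^2/2}\,du. \]
Since $u/t \ge 1$ on the range of integration, I would bound
\[ \int_t^\infty e^{-u^2/2}\,du \le \frac{1}{t} \int_t^\infty u\, e^{-u^2/2}\,du = \frac{1}{t}\, e^{-t^2/2}, \]
which gives the intermediate estimate $\Pr[\,|z|>t\sigma\,] \le \frac{2}{t\sqrt{2\pi}}\, e^{-t^2/2}$.

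Next I would invoke the hypothesis $t > 2$. Since $\sqrt{2\pi} > 2$, this yields $\frac{2}{t\sqrt{2\pi}} < \frac{1}{\sqrt{2\pi}} < \frac{1}{2}$, so
\[ \Pr[\,|z|>t\sigma\,] < \frac{1}{2}\, e^{-t^2/2} \le \frac{1}{2}\bigl(e^{-t^2} + e^{-t^2/2}\bigr), \]
where the final step uses only $e^{-t^2} \ge 0$. This is exactly the claimed bound.

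The proof is essentially routine, so I do not anticipate a serious obstacle. The only mildly delicate point is the numerical inequality $\sqrt{2\pi} > 2$, which is precisely what allows the slack in Mills' ratio to be absorbed into the extra $e^{-t^2}$ term appearing on the right-hand side. An alternative route, perhaps closer to the original argument in \cite{CDS03}, is to write $\Pr[\,|z|>t\sigma\,] = \operatorname{erfc}(t/\sqrt{2})$ and invoke a sharper erfc estimate of the form $\operatorname{erfc}(x) \le \tfrac{1}{2}(e^{-2x^2} + e^{-x^2})$, which reproduces the exact shape of the stated inequality; however, the elementary Mills-ratio argument above is already sufficient under the hypothesis $t > 2$.
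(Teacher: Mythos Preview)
Your argument is correct: the Mills-ratio bound $\int_t^\infty e^{-u^2/2}\,du \le t^{-1}e^{-t^2/2}$ combined with $t>2$ and $\sqrt{2\pi}>2$ already gives $\Pr[|z|>t\sigma]<\tfrac12 e^{-t^2/2}$, which is dominated by the stated right-hand side.

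As for comparison with the paper: the paper does not actually prove this lemma. It is stated with a citation to \cite{CDS03} and used as a black box. The original source establishes the sharper two-term exponential bound $\operatorname{erfc}(x)\le \tfrac12 e^{-2x^2}+\tfrac12 e^{-x^2}$ (valid for all $x\ge 0$, not only $x>\sqrt{2}$), from which the lemma follows by substituting $x=t/\sqrt{2}$. Your Mills-ratio route is more elementary and self-contained, at the cost of needing the hypothesis $t>2$ and discarding the $e^{-t^2}$ term entirely; the \cite{CDS03} approach explains why that extra term is present in the statement and gives a bound that is meaningful for all $t>0$.
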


%In the above Lemma \ref{L12}, we can have ${\rm Pr}[\, |z| > t \sigma \; | \; z \leftarrow Z \,] < e^{-\frac{t^2}{2}}$.

\subsection{Multivariate Gaussian Distributions}

%The multivariate normal/Gaussian distribution is defined as follows.
We define the multivariate Gaussian/normal distribution as follows.
For $\mathbf{c} = (c_0, c_1, \ldots, c_{n-1}) \in \mathbb{R}^n$ and a positive-definite symmetric matrix
$\Sigma \in \mathbb{R}^{n \times n}$, the multivariate normal distribution centered at $\mathbf{c}$ with covariance parameter $\Sigma$ is denoted by $\mathcal{N}_{\mathbb{R}^n, \Sigma, \mathbf{c}}$ and has probability density function given by $\rho_{\Sigma, \mathbf{c}}(\mathbf{t}):=\frac{1}{\sqrt{(2\pi)^n \det(\Sigma)}} \exp(-\frac12 (\mathbf{t}-\mathbf{c})^T \Sigma^{-1} (\mathbf{t}-\mathbf{c} ))$ for $\mathbf{t} \in \mathbb{R}^n$.
We remark that the univariate normal distribution $\mathcal{N}(\mu, \sigma^2)$ can be viewed as $\mathcal{N}_{\mathbb{R},\sigma^2, \mu}$.

%We have the following results which relate the univariate and multivariate normal distributions.
The following results relate the univariate and multivariate normal distributions.
%\medskip

\begin{lem}[{{\cite[Theorems 3.3.1 and 3.3.2]{Tong90}}}]\label{LemMultivariate}   \hfill
    \begin{itemize} 
        \item[(i)] Suppose $t_i$ follows a univariate normal distribution $\mathcal{N}(\mu_i, \sigma_i^2)$ for each $0 \le i \le n-1$. Then $\mathbf{t}=(t_0, t_1, \ldots, t_{n-1}) \in \mathbb{R}^n$ follows the multivariate normal distribution $\mathcal{N}_{\mathbb{R}^n,\Sigma, \bm{\mu}}$, where $\Sigma = {\rm diag}(\sigma_0^2, \sigma_1^2, \ldots, \sigma_{n-1}^2)$ and $\bm{\mu} = (\mu_0, \mu_1, \ldots, \mu_{n-1})$.
        \item[(ii)] If $\mathbf{t}=(t_0,t_1, \ldots, t_{n-1})$ follows a multivariate normal distribution $\mathcal{N}_{\mathbb{R}^n,\Sigma,\mathbf{c}}$, then $t_i$ follows $\mathcal{N}(c_i, \sigma_i^2)$ for all $0 \le i \le n-1$, where $\mathbf{c}=(c_0,c_1, \ldots, c_{n-1})$ and $\sigma_i$ is the $(i,i)$-entry of $\Sigma$. If moreover $\Sigma = \sigma^2 I_n$ for some $\sigma \in \mathbb{R}$ (where $I_n$ is the $n \times n$ identity matrix), then $t_0, t_1, \ldots, t_{n-1}$ are independent.
    \end{itemize}
\end{lem}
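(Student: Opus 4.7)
The plan is to prove both parts by direct computation with Gaussian density functions, exploiting the algebraic structure of the quadratic form in the exponent.

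For part (i), assuming the components $t_i$ are independent (which is implicit in the hypothesis given the intended conclusion), I would form the joint density as the product $\prod_{i=0}^{n-1} \rho_{\sigma_i}(t_i - \mu_i)$ of the given univariate Gaussian densities, then collect factors. The result is $\frac{1}{\sqrt{(2\pi)^n \prod_i \sigma_i^2}} \exp\!\left(-\frac12 \sum_i (t_i - \mu_i)^2/\sigma_i^2\right)$. For $\Sigma = \mathrm{diag}(\sigma_0^2, \ldots, \sigma_{n-1}^2)$ one has $\det(\Sigma) = \prod_i \sigma_i^2$ and $(\mathbf{t} - \boldsymbol{\mu})^T \Sigma^{-1} (\mathbf{t} - \boldsymbol{\mu}) = \sum_i (t_i - \mu_i)^2/\sigma_i^2$, so the product equals $\rho_{\Sigma, \boldsymbol{\mu}}(\mathbf{t})$ exactly. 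The only step requiring thought is recognizing the diagonal quadratic form.

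For part (ii), the marginal statement is the nontrivial one because $\Sigma$ need not be diagonal. My preferred route is via characteristic functions: the characteristic function of $\mathbf{t} \sim \mathcal{N}_{\mathbb{R}^n, \Sigma, \mathbf{c}}$ is $\varphi_{\mathbf{t}}(\mathbf{u}) = \exp(i \mathbf{u}^T \mathbf{c} - \frac12 \mathbf{u}^T \Sigma \mathbf{u})$, a standard identity proved by completing the square in the defining integral. Setting $\mathbf{u} = u \mathbf{e}_i$ with $\mathbf{e}_i$ the $i$-th standard basis vector gives $\varphi_{t_i}(u) = \exp(i u c_i - \frac12 u^2 \Sigma_{ii})$, which is the characteristic function of $\mathcal{N}(c_i, \Sigma_{ii})$. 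Uniqueness of characteristic functions then yields $t_i \sim \mathcal{N}(c_i, \sigma_i^2)$ with $\sigma_i^2 = \Sigma_{ii}$. An alternative would be direct integration using a Schur-complement block decomposition, but the characteristic function method avoids having to invert $\Sigma$ or carry out an $(n-1)$-fold integral explicitly.

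For the independence claim when $\Sigma = \sigma^2 I_n$, I would specialize the multivariate density to $\rho_{\Sigma, \mathbf{c}}(\mathbf{t}) = \frac{1}{(2\pi \sigma^2)^{n/2}} \exp\!\left(-\frac{1}{2\sigma^2} \sum_{i=0}^{n-1} (t_i - c_i)^2\right)$, which factors as $\prod_{i=0}^{n-1} \rho_{\sigma}(t_i - c_i)$. Since the joint density factors into the product of the marginals already computed in the first part of (ii), the components $t_0, \ldots, t_{n-1}$ are mutually independent.

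The main obstacle is the marginal statement in part (ii), since marginals cannot be read off by simple factorization when $\Sigma$ is non-diagonal; the characteristic-function argument sidesteps this cleanly. Every remaining step reduces to identifying quadratic forms and detecting factorization of densities.
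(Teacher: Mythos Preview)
Your proof is correct, but note that the paper does not actually prove this lemma: it is stated with a citation to \cite[Theorems 3.3.1 and 3.3.2]{Tong90} and no proof is given in the paper itself. So there is no ``paper's own proof'' to compare against; you have supplied an argument where the authors simply defer to the literature.

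On the substance: your handling of part (i) is fine once independence of the $t_i$ is assumed, and you are right that this hypothesis is implicit in the statement (without it the conclusion is false). For part (ii), the characteristic-function route is the cleanest way to extract marginals from a multivariate normal with arbitrary positive-definite $\Sigma$, and your factorization argument for independence when $\Sigma = \sigma^2 I_n$ is the standard one. Tong's proofs in the cited reference proceed along essentially the same lines (characteristic functions for marginals, density factorization for independence), so your approach aligns with the source the paper invokes.
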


The discrete (multivariate) Gaussian distribution $\mathcal{D}_{\mathbb{Z}^n,\Sigma, \mathbf{c}}$ is the distribution obtained by restricting the support of $\mathcal{N}_{\mathbb{R}^n, \Sigma, \mathbf{c}}$ to the set $\mathbb{Z}^n$.
%In cryptography, we often restrict the support of the multivariate Gaussian distribution to $\mathbb{Z}^n$, resulting in the so-called discrete (multivariate) Gaussian distribution $\mathcal{D}_{\mathbb{Z}^n,\Sigma, \mathbf{c}}$. 
For any $\mathbf{t} \in \mathbb{Z}^n$, the probability that $\mathbf{t}$ appears is proportionate to $\rho_{\Sigma,\mathbf{c}}(\mathbf{t})$. So, the discrete Gaussian distribution $\mathcal{D}_{\mathbb{Z}^n,\Sigma, \mathbf{c}}$ has probability mass function  given by $\rho_{\Sigma,\mathbf{c}}'(\mathbf{t}):= \displaystyle\frac{\rho_{\Sigma,\mathbf{c}}(\mathbf{t})}{\sum_{\mathbf{y} \in \mathbb{Z}^n} \rho_{\Sigma,\mathbf{c}}(\mathbf{y})}$ for $\mathbf{t} \in \mathbb{Z}^n$.

The notion of discrete multivariate Gaussian distribution can be extended to a distribution over $\mathcal{R}=\mathbb{Z}[x]/(x^n+1)$. For $\bm c = \sum_{i=0}^{n-1}c_ix^i \in \mathcal{R}$ and a positive-definite symmetric matrix $\Sigma \in \mathbb{R}^{n \times n}$, we define $$\mathcal{D}_{\mathcal{R}, \Sigma, \bm c}:= \left\{ \sum_{i=0}^{n-1} t_i x^i \in \mathcal{R} \mid (t_0, t_1, \ldots, t_{n-1}) \leftarrow \mathcal{D}_{\mathbb{Z}^n,\Sigma,\mathbf{c}} \right\},$$
where $\mathbf{c} = (c_0, c_1, \ldots, c_{n-1})$.
%We may naturally extend the notion of discrete Gaussian distribution with support in $\mathbb{Z}^n$ to a distribution over $\mathcal{R}=\mathbb{Z}[x]/(x^n+1)$. Let $\Sigma \in \mathbb{R}^{n \times n}$ be a positive-definite symmetric matrix and let $\bm c = \sum_{i=0}^{n-1}c_ix^i \in \mathcal{R}$. We define the distribution $$\mathcal{D}_{\mathcal{R}, \Sigma, \bm c}:= \left\{ \sum_{i=0}^{n-1} t_i x^i \in \mathcal{R} \mid (t_0, t_1, \ldots, t_{n-1}) \leftarrow \mathcal{D}_{\mathbb{Z}^n,\Sigma,\mathbf{c}} \right\},$$ where $\mathbf{c} = (c_0, c_1, \ldots, c_{n-1})$.

\subsection{Cauchy Distribution and Truncated Cauchy Distribution}

The Cauchy distribution $\mathcal{C}(\alpha,\beta)$ has density function
\[ f_X(x)=\frac{1}{\beta \pi \, (1+(\frac{x-\alpha}{\beta})^2)}, \;\ \; -\infty \le x \le \infty, \]
where $\alpha \in \mathbb{R}$ and $ \beta >0$ are the location and scale parameter respectively.

%It is well known that the mean and the second moment of a Cauchy distribution do not exist. Let $X_i$, $i=1,\ldots, \bar{n}$ be independent identical random variable from Cauchy distribution $\mathcal{C}(\alpha,\beta)$ and $\bar{X}=\frac{1}{\bar{n}}\sum_{i=1}^{\bar{n}} X_i$. Then the Central Limit Theorem for the asymptotic normality of $\bar{X}$ does not apply.
It is well known that the mean and the second moment of a Cauchy distribution do not exist. Let $X_i$, $i=1,\ldots, \bar{n}$ be independent and identically distributed random variables, each following the Cauchy distribution $\mathcal{C}(\alpha,\beta)$ and $\bar{X}=\frac{1}{\bar{n}}\sum_{i=1}^{\bar{n}} X_i$. As the Cauchy distribution does not have finite mean, the Central Limit Theorem for the asymptotic normality of $\bar{X}$ does not apply. 

%\medskip

The truncated Cauchy distribution \cite[Section 2]{B-I13} is obtained by restricting the Cauchy distribution to a finite interval $I=[\alpha-L,\alpha+L]$, symmetric with respect to $x=\alpha$, where $L >0$. The truncated Cauchy distribution $\mathcal{C}(\alpha,\beta \, | \,[\alpha-L,\alpha+L])$ has density function given by 
\[ f_L(x)=\frac{1}{2\arctan(\frac{L}{\beta}) \; \beta(1+(\frac{x-\alpha}{\beta})^2)}, \; \; \; \alpha-L \le x \le a+L. \]

If $L \longrightarrow \infty$, then $\arctan(\frac{L}{\beta})$ tends to $\frac{\pi}{2}$ and $\mathcal{C}(\alpha,\beta \, | \,[\alpha-L,\alpha+L])$ becomes the Cauchy distribution $\mathcal{C}(\alpha,\beta)$.

The truncated Cauchy distribution $\mathcal{C}(\alpha,\beta \, | \,[\alpha-L,\alpha+L])=:X$ has finite expected value and variance, which are 
\[ \mathbb{E}(X)=\alpha, \hspace{0.5cm} \mathbb{V}(X)=\frac{\beta L}{\arctan(\frac{L}{\beta})}-\beta^2. \] 

\begin{prop} \label{truncated_cauchy_to_normal}
%Let $Y_i$, $i=1,\ldots, \bar{n}$ be independent identical random variable from truncated Cauchy distribution $\mathcal{C}(\alpha,\beta \, | \,[\alpha-L,\alpha+L])=:X$ and $\bar{Y}=\frac{1}{\bar{n}}\sum_{i=1}^{\bar{n}} Y_i$. Then the Central Limit Theorem for the asymptotic normality of $\bar{Y}$ does apply and $\sqrt{\bar{n}}(\bar{Y}-\alpha)$ approximates to the normal distribution $\mathcal{N}(0,\mathbb{V}(X))$.
Let $X:= \mathcal{C}(\alpha,\beta \, | \,[\alpha-L,\alpha+L])$ be the truncated Cauchy distribution. Let  $Y_i$, $i=1,\ldots, \bar{n}$ be independent and identically distributed random variables, where each $Y_i \thicksim X$. If $\bar{Y}=\frac{1}{\bar{n}}\sum_{i=1}^{\bar{n}} Y_i$, then the Central Limit Theorem for the asymptotic normality of $\bar{Y}$ does apply and $\sqrt{\bar{n}}(\bar{Y}-\alpha)$ approximates to the normal distribution $\mathcal{N}(0,\mathbb{V}(X))$.
\end{prop}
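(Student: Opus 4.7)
The plan is to apply the Central Limit Theorem (Theorem \ref{CLT}) directly, once we verify that the hypotheses are satisfied. The only reason the CLT fails for the ordinary Cauchy distribution $\mathcal{C}(\alpha,\beta)$ is the non-existence of its mean and variance; after truncating to the symmetric interval $[\alpha-L, \alpha+L]$, these moments become finite, so the usual CLT machinery becomes available again.

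Concretely, I would proceed as follows. First, I would recall (or briefly verify) the two moment identities stated just above the proposition: $\mathbb{E}(X) = \alpha$ and $\mathbb{V}(X) = \beta L/\arctan(L/\beta) - \beta^2$. The mean follows immediately from the symmetry of the density $f_L$ about $x = \alpha$ on the symmetric truncation interval, while the variance comes from a direct computation of $\int_{\alpha-L}^{\alpha+L}(x-\alpha)^2 f_L(x)\,\mathrm{d}x$ using the substitution $u = (x-\alpha)/\beta$ and the antiderivative of $u^2/(1+u^2)$, which is $u - \arctan(u)$. Both quantities are finite for any $L > 0$.

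Next, since the $Y_i$ are i.i.d. copies of $X$, which has finite mean $\mu = \alpha$ and finite standard deviation $\sigma = \sqrt{\mathbb{V}(X)}$, Theorem \ref{CLT} applies verbatim to the sample mean $\bar{Y} = \frac{1}{\bar{n}}\sum_{i=1}^{\bar{n}} Y_i$, giving that $\sqrt{\bar{n}}(\bar{Y}-\alpha)$ tends in distribution to $\mathcal{N}(0, \mathbb{V}(X))$. This is the desired conclusion.

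The only potential obstacle is ensuring that truncation genuinely yields finite second moment, but because the tails of the Cauchy density decay like $1/x^2$ and the truncation removes them entirely, the variance integral is over a bounded interval against a bounded density and hence is manifestly finite. In short, this proof amounts to a finite-moment check followed by a direct invocation of Theorem \ref{CLT}.
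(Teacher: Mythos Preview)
Your proposal is correct and matches the paper's (implicit) approach: the paper does not give a separate proof of this proposition, having already recorded just above it that the truncated Cauchy distribution has finite mean $\alpha$ and finite variance $\beta L/\arctan(L/\beta)-\beta^2$, so the result is an immediate application of Theorem~\ref{CLT}. Your write-up simply makes this explicit, including the finite-second-moment check, which is exactly what is needed.
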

 
%\medskip

\begin{prop} {\rm \cite{RD}} \label{ratio_normal}
Let $Y,Z$ be two correlated normal distributions $Y \thicksim \mathcal{N}(\mu_Y, \sigma_Y^2)$ and $Z \thicksim \mathcal{N}(\mu_Z, \sigma_Z^2)$, where $\mu_Y=\mu_Z=0$. Then, the ratio $X=\frac{Y}{Z}$ of $Y$ and $Z$ is a Cauchy distribution $\mathcal{C}(\alpha,\beta)$, where 
\[ \alpha= \rho \frac{\sigma_Y}{\sigma_Z}, \;  \beta=\frac{\sigma_Y}{\sigma_Z} \sqrt{1-\rho^2}, \;  \rho=\frac{\mathbb{E}\left((Y-\mu_Y)(Z-\mu_Z)\right)}{\sigma_Y \sigma_Z}.  \] 
\end{prop}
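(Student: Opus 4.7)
The plan is to reduce the claim to the well-known fact that the ratio of two independent standard normal random variables is standard Cauchy, combined with the elementary location-scale behaviour of the Cauchy family.

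First I would decompose $Y$ along $Z$ and its orthogonal complement. Since $(Y,Z)$ is centred jointly Gaussian with correlation $\rho$, one may write $Y = aZ + bW$, where $W \sim \mathcal{N}(0,1)$ is independent of $Z$. Matching $\mathbb{E}(YZ) = \rho\, \sigma_Y \sigma_Z$ forces $a = \rho\,\sigma_Y/\sigma_Z$, and matching $\mathbb{V}(Y) = \sigma_Y^2$ then forces $b = \sigma_Y \sqrt{1-\rho^2}$. Dividing by $Z$ gives
\[
X \;=\; \frac{Y}{Z} \;=\; a \;+\; b\,\frac{W}{Z}.
\]

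Next I would analyse $W/Z$. Writing $Z = \sigma_Z \tilde Z$ with $\tilde Z \sim \mathcal{N}(0,1)$, we have $W/Z = (1/\sigma_Z)\cdot W/\tilde Z$, where $W$ and $\tilde Z$ are independent standard normals. Computing the density of $W/\tilde Z$ by the standard change of variables $(u,v)=(w/\tilde z,\tilde z)$ and integrating out $v$ against the one-dimensional Gaussian kernel $\int_{\mathbb{R}} |v|\, e^{-v^2(1+u^2)/2}\,{\rm d}v$ yields the standard Cauchy density $\pi^{-1}(1+u^2)^{-1}$, so $W/\tilde Z \sim \mathcal{C}(0,1)$. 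By the elementary scaling rule $cU \sim \mathcal{C}(0,|c|)$ when $U \sim \mathcal{C}(0,1)$, applied with $c = 1/\sigma_Z$, we obtain $W/Z \sim \mathcal{C}(0,\,1/\sigma_Z)$.

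Finally I would invoke the location-scale property of the Cauchy family: if $U \sim \mathcal{C}(\alpha_0,\beta_0)$, then $a + bU \sim \mathcal{C}(a + b\alpha_0,\, |b|\beta_0)$. Applied to $U = W/Z$ with $\alpha_0 = 0$, $\beta_0 = 1/\sigma_Z$, and the constants $a = \rho\,\sigma_Y/\sigma_Z$, $b = \sigma_Y\sqrt{1-\rho^2}$, this yields
\[
X \;\sim\; \mathcal{C}\!\left(\rho\,\frac{\sigma_Y}{\sigma_Z},\; \frac{\sigma_Y}{\sigma_Z}\sqrt{1-\rho^2}\right),
\]
which is precisely the assertion. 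The main obstacle is purely bookkeeping: tracking the scale factors correctly through the decomposition $Y = aZ + bW$ and through the scaling of the Cauchy parameter. The only genuinely analytic step is the density computation for the ratio of two independent standard normals, which is classical and only requires evaluating a single Gaussian integral.
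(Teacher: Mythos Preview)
Your argument is correct and is the standard route to this result: decompose $Y$ in the Gaussian regression form $Y=aZ+bW$ with $W$ independent of $Z$, reduce to the classical fact that the ratio of two independent standard normals is $\mathcal{C}(0,1)$, and push the constants through the location--scale rule for the Cauchy family. One small point worth making explicit is that the decomposition step uses that $(Y,Z)$ is \emph{jointly} Gaussian, not merely that each marginal is normal; the proposition's phrasing ``two correlated normal distributions'' is loose, but joint Gaussianity is what is intended (and what is required for the conclusion to hold exactly).

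As for comparison with the paper: there is nothing to compare against. The paper does not prove this proposition; it is stated with a citation to \cite{RD} (the Wikipedia article on ratio distributions) and used as a black box in the proof of Theorem~\ref{exp_ratio}. Your write-up therefore supplies a self-contained justification that the paper omits.
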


%\medskip

\section{The G+G Convoluted Gaussian Signature} \label{G+G}

This section briefly reviews the G+G convoluted Gaussian signature proposed in \cite{DPS23}.
For $\eta >0$, denote $\chi_{\eta}:=\mathcal{U}(\{\bm s \in \mathcal{R} \mid \|\bm s\|_{\infty} \le \eta\})$, i.e. the uniform distribution on $\{\bm s \in \mathcal{R} \mid \|\bm s\|_{\infty} \le \eta\}$.  Let $\mathbf{j}:=(\bm 1,0,\ldots,0) \in \mathcal{R}^m$.
We  define $\mathcal{C}$ to be the set $\mathcal{C}:=\{ \bm a = \sum_{i=0}^{n-1} a_ix^i \in \mathcal{R} \mid  a_i \in \{0,1\} \text{ for } 0 \le i < n \}$.
% For $\bm s \in \mathcal{R}$ and $\sigma, \sigma_u > 0$, we define 
% $$\mathbf{\Sigma}(\bm{s},\sigma,\sigma_u):=\sigma^2 I_n - \sigma_u^2 {\rm circ}(\bm s){\rm circ}(\bm s)^T,$$
% where $I_n$ is the $n\times n$ identity matrix and ${\rm circ}(\bm s)$ is a skew circulant matrix. %$$ {\rm circ}(\bm s):= \begin{pmatrix}s_0 & s_1 & \ldots & s_{n-1} \\ -s_{n-1} & s_0 & \ldots & s_{n-2} \\  \vdots & \vdots & \ddots & \vdots \\ -s_1 & -s_2 & \ldots & s_{0}  \end{pmatrix}.$$
% Now, suppose $\mathbf{s} = (\bm s_0, \bm s_1, \ldots, \bm s_{k-1}) \in \mathcal{R}^k$, where each $\bm s_i \in \mathcal{R}$ for $0 \le i \le k-1$.
% We then define $\mathcal{D}_{\mathcal{R}^k, \mathbf{\Sigma}(\mathbf{s},\sigma,\sigma_u) , \bm 0}$ to be the probability distribution {\color{red}that samples an element $\mathbf{y} = (\bm y_0, \bm y_1, \ldots, \bm y_{k-1}) \in \mathcal{R}^k$ by sampling each $\bm y_i$ from $\mathcal{D}_{\mathcal{R}^k, \mathbf{\Sigma}(\bm s_i,\sigma, \sigma_u),\bm 0}$ for all $0 \le i \le k-1$.}
For $\mathbf{s} = (\bm s_0, \bm s_1, \ldots, \bm s_{k-1}) \in \mathcal{R}^k$ and $\sigma, \sigma_u > 0$, we define 
$$\mathbf{\Sigma}(\mathbf{s},\sigma,\sigma_u):=\sigma^2 I_{nk} - \sigma_u^2 {\rm circ}(\mathbf{s}){\rm circ}(\mathbf{s})^T,$$
where $I_{nk}$ is the $nk \times nk$ identity matrix and ${\rm circ}(\mathbf{s})$ is a skew circulant matrix.

Given the public parameters consisting of some positive integers $n,q,m,k$ with $k>m+1$ and some real numbers $B_s,\sigma_u,\sigma, B_z$, the G+G convoluted Gaussian signature (G+G CGS) is described in Algorithm \ref{AlgoKeyGen}, \ref{AlgoSign} and \ref{AlgoVerify}.

%\newpage

\begin{algorithm}
	\SetAlgoNoEnd
    \SetAlgoNoLine
    \SetNlSty{}{}{}
    \SetNlSkip{0.5em}
    \SetKwInOut{KwIn}{Input}
    \SetKwInOut{KwOut}{Output}

    \KwIn{security parameter $\lambda$}
    \KwOut{$pk= \mathbf{A}$ and $sk = \mathbf{s}$}

    Choose $\mathbf{s}_1 \leftarrow \chi_{\eta}^{k-m-1}$ and $\mathbf{s}_2 \leftarrow \chi_{\eta}^m$ 
    
    Set $\mathbf{s}:=(\bm{1} \mid \mathbf{s}_1 \mid \mathbf{s}_2) \in \mathcal{R}_{2q}^k$
    
  %  \If {$\| \mathbf{s}\| \ge B_s$}{repeat from Step 1} 
    {\bf if} $\| \mathbf{s}\| \ge B_s$, {\bf then} repeat from Step 1
    
    Choose $\mathbf{A}_0 \leftarrow \mathcal{U}(\mathcal{R}_q^{m \times (k-m-1)})$ 
    
    Compute $\mathbf{b}^T:= \mathbf{A}_0\mathbf{s}_1^T + \mathbf{s}_2^T \bmod{q}$ 
    
    Set $\mathbf{A}:=(-2\mathbf{b}^T + q\mathbf{j}^T \mid 2\mathbf{A}_0 \mid 2 \mathbf{I}_m)$ \tcp{\rm where $\mathbf{I}_m$ is the $m \times m$ identity matrix} 
    
    The public key is $pk= \mathbf{A}$ and the secret key is $sk = \mathbf{s}$

    \caption{Key Generation of Generic G+G CGS}\label{AlgoKeyGen}
\end{algorithm}

%\vspace{-0.5cm}

\begin{algorithm}
    \SetAlgoNoEnd
    \SetAlgoNoLine
    \SetNlSty{}{}{}
    \SetNlSkip{0.5em}
    \SetKwInOut{KwIn}{Input}
    \SetKwInOut{KwOut}{Output}

    \KwIn{message ${\sf m}$, $pk= \mathbf{A}$ and $sk = \mathbf{s}$}
    \KwOut{signature $\mathfrak{S}$}
    
    Choose $\mathbf{y} \leftarrow \mathcal{D}_{\mathcal{R}^k, \mathbf{\Sigma}(\mathbf{s},\sigma, \sigma_u) , \bm 0}$ 
    
    Compute $\mathbf{v}  := \mathbf{Ay} \bmod{2q}$ 
    
    Compute $\bm{c}:=\mathcal{H}(\mathbf{v}, {\sf m} ) \in \mathcal{C}$ 
    
    Choose $\bm{u}\leftarrow \mathcal{D}_{\mathcal{R},\sigma_u^2 I_n,-\bm c/2}$ \tcp{\rm where $I_n$ is the $n \times n$ identity matrix} %\\ 
    Compute $\mathbf{z}:= \mathbf{y} + (2 \bm u + \bm c) \mathbf{s}$ 
    
    The signature is $\mathfrak{S} = (\mathbf{z}, \bm{c})$

    \caption{Signing of the Generic G+G CGS}\label{AlgoSign}
\end{algorithm}

%\vspace{-0.5cm}

\begin{algorithm}
    \SetAlgoNoEnd
    \SetAlgoNoLine
    \SetNlSty{}{}{}
    \SetNlSkip{0.5em}
    \SetKwInOut{KwIn}{Input}
    \SetKwInOut{KwOut}{Output}

    \KwIn{message ${\sf m}$, $pk= \mathbf{A}$, signature $\mathfrak{S} = (\mathbf{z}, \bm{c})$}
    \KwOut{validity of the signature}
    
    Compute $\mathbf{v}:= \mathbf{Az}^T - q\bm c \mathbf{j}^T \bmod{2q}$ 
    
   % \eIf{$\mathcal{H}(\mathbf{w}, {\sf m}) = \bm  c$ {\rm and} $\|\mathbf{z}\| \le B_z$}{the signature is valid}{the signature is invalid}
   {\bf if} $\mathcal{H}(\mathbf{v}, {\sf m}) = \bm  c$ {\rm and} $\|\mathbf{z}\| \le B_z$ {\bf then} signature is valid %valid signature
   
   {\bf else} signature is invalid

    \caption{Verification of the Generic G+G CGS}\label{AlgoVerify}
\end{algorithm}

%\newpage

Although a concrete example of the module-LWE variant was proposed in the paper \cite{DPS23} to realize the generic G+G convoluted Gaussian signature, its signature is $\mathbf{z} = \mathbf{y} + (\bm \zeta \bm u + \bm c) \mathbf{s}$, where $\bm \zeta=1+x^{n/2} \in \mathcal{R}$ and the mean of the coordinates of $\bm \zeta \bm u + \bm c$ are not all zero. This concrete instantiation is different from that of the above generic signature $\mathbf{z}= \mathbf{y} + (2\bm u + \bm  c) \mathbf{s}$.  Therefore, there is no concrete parameters given for the generic G+G convoluted Gaussian signature.

\section{Ratio Attacks on G+G  Convoluted Gaussian Signatures}\label{SecRatioAttack}

Recall that in the G+G convoluted Gaussian signature, we have $\mathbf{z} = \mathbf{y} + (2 \bm u + \bm c) \mathbf{s}$ in $\mathcal{R}^k$, where $\bm u, \bm c \in \mathcal{R}$, $\mathbf{y}= (\bm y_0, \ldots, \bm y_{k-1}) \in \mathcal{R}^k$; $\mathbf{s} = (\bm s_0, \bm s_1, \ldots, \bm s_{k-1}) \in \mathcal{R}^k$ is the secret key, and $\mathbf{z} = (\bm z_0, \ldots, \bm z_{k-1}) \in \mathcal{R}^k$. 
It is noted that $\bm s_0=\bm 1 \in \mathcal{R}$ and so $\bm z_0=\bm y_0+(2 \bm u + \bm c)$. 
Since $\mathbf{y}$  and $\bm u$ are sampled from the multivariate normal distributions $\mathcal{D}_{\mathcal{R}^k, \mathbf{\Sigma}(\mathbf{s},\sigma, \sigma_u) , \bm 0}$ and $\mathcal{D}_{\mathcal{R},\sigma_u^2 I_n,-\bm c/2}$ respectively, then $\bm z_i = \bm y_i + (2 \bm u + \bm c) \bm s_i$ also follows a normal distribution for all $0 \le i \le k-1$.  
%It is noted that $\bm z_0=\bm y_0+(2 \bm u + \bm c)$. 
Let $\bm a_i=\sum_{j=0}^{n-1} a_{i,j} x^j \in \mathcal{R}$ for  $1\le i \le k-1$ and $\bm a=\bm y$ or $\bm z$.  
Let $\bm b=\sum_{j=0}^{n-1 }b_j  x^j \in \mathcal{R}$ for $\bm b=\bm u$ or $\bm c$.  
Define $w_j:=2u_j+c_j$ for $0 \le j \le n-1$. Then $z_{0,0}=y_{0,0}+w_0$ and \begin{equation}\label{EqZ}z_{i,j}=y_{i,j}+w_0s_{i,j}+\sum_{l+m=j \bmod n, l \ne 0} \varepsilon_{l,m} w_ls_{i,m}\end{equation} for $0 \le l,m \le n-1$, $1 \le i \le k-1$ and $0 \le j \le n-1$, where $\varepsilon_{l,m}=\left \{ \begin{array}{ll} 1 & \mbox{if $l+m < n$}, \\ -1 & \mbox{if $l+m \ge n$.} \end{array} \right. $

%\medskip

The ratio attack takes the ratio $\frac{Y}{Z}$ of two correlated distributions $Y$ and $Z$ and finds its expected value which is related to the secret key. In our case, we consider the ratio $\frac{Z_{i,j}}{Z_{0,0}}$ of two distributions $Z_{i,j}$ and $Z_{0,0}$ for $1 \le i \le k-1$ and $0 \le j \le n-1$, where $Z_{0.0}$ and $Z_{i,j}$ are the distribution for $z_{0,0}$ and $z_{i,j}$ respectively. The goal of this ratio attack is to find an exact formula relating the expected value $\mathbb{E}(\frac{Z_{i,j}}{Z_{0,0}})$ and the secret key $s_{i,j}$ for $1 \le i \le k-1$ and $0 \le j \le n-1$. 

%\medskip

Recall that $\mathcal{C}:=\{ \bm a = \sum_{i=0}^{n-1} a_ix^i \in \mathcal{R} \mid  a_i \in \{0,1\} \text{ for } 0 \le i < n \}$ and $\mathcal{U}(\mathcal{C})$ is the uniform distribution on $\mathcal{C}$; and $\bm s_i \leftarrow \chi_{\eta}$ for $1 \le i \le k-1$. 
We denote $V$ to be a distribution and $v$ is an element from distribution $V$. Then, we have the following result.

%\medskip

\begin{lem}\label{LemEc}
    If $\bm c \leftarrow \mathcal{U}(\mathcal{C})$, then $\mathbb{E}(C_j) = \frac12$ for $0 \le j \le n-1$.
\end{lem}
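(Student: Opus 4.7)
The plan is to reduce the claim to a trivial computation by exploiting the product structure of $\mathcal{C}$. First I would observe that the map $\bm a = \sum_{i=0}^{n-1}a_i x^i \mapsto (a_0,a_1,\ldots,a_{n-1})$ gives a bijection between $\mathcal{C}$ and $\{0,1\}^n$. In particular $|\mathcal{C}|=2^n$, and the pushforward of $\mathcal{U}(\mathcal{C})$ under this bijection is the uniform distribution on $\{0,1\}^n$, which is the product of $n$ independent $\mathcal{U}(\{0,1\})$ distributions.

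Next I would conclude that the coordinate random variables $C_0,C_1,\ldots,C_{n-1}$ are independent, each distributed as $\mathcal{U}(\{0,1\})$. Then for each $j$ a direct computation gives
\[
\mathbb{E}(C_j) \;=\; 0 \cdot \Pr(C_j = 0) \,+\, 1 \cdot \Pr(C_j = 1) \;=\; 0 \cdot \tfrac12 + 1 \cdot \tfrac12 \;=\; \tfrac12,
\]
which is the claimed identity.

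There is essentially no obstacle here; the only thing to be careful about is to verify that sampling $\bm c \leftarrow \mathcal{U}(\mathcal{C})$ really does make the coordinates independent and marginally $\mathcal{U}(\{0,1\})$, which follows because $\mathcal{C}$ is the full product set $\{0,1\}^n$ (as opposed to, say, a Hamming-weight restricted subset, in which case the marginals would still be $\frac12$ by symmetry but would no longer be independent). Since the statement asserts only the marginal expectation, the product-structure argument above suffices.
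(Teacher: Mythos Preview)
Your argument is correct and is exactly the natural one: since $\mathcal{C}$ is the full product set $\{0,1\}^n$ (via the coefficient bijection), the uniform distribution on $\mathcal{C}$ makes each coordinate $C_j$ marginally $\mathcal{U}(\{0,1\})$, and the expectation is immediate. The paper in fact states this lemma without proof, treating it as evident, so your write-up simply spells out what the authors left implicit.
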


%\medskip

\begin{lem}\label{LemEV}
    Suppose $\bm c \leftarrow \mathcal{U}(\mathcal{C})$,  $\bm u \leftarrow \mathcal{D}_{\mathcal{R},\sigma_u^2 I_n,-\bm c/2}$, $\mathbf{y} \leftarrow  \mathcal{D}_{\mathcal{R}^k, \mathbf{\Sigma}(\mathbf{s},\sigma, \sigma_u) , \bm 0}$ and $\| \mathbf{s} \|_{\infty} \le \eta$ . Then, for $0 \le j \le n-1$ and $1 \le i \le k-1$,
   
\noindent {\rm (i)} $\mathbb{E}(U_j) = -\frac12 \mathbb{E}(C_j)$,
       
\noindent {\rm (ii)} $\mathbb{E}(2 U_j + C_j) = 0$,
       
\noindent {\rm (iii)} $\mathbb{V}(2 U_j + C_j) = 4 \sigma_u^2$,
       
%\noindent {\rm (iv)} $\mathbb{V}(z_{0,j}) = \sigma^2-\sigma_u^2 \| \bm s_i \| ^2 + 4 \sigma_u^2$,

\noindent {\rm (iv)} $\mathbb{V}(Z_{0,j}) = \sigma^2 + 3 \sigma_u^2$, 
       
%\noindent {\rm (v)} $\mathbb{V}(Z_{i,j}) = \sigma^2-\sigma_u^2 \| \bm s_i \| ^2 +  \frac{4 \sigma_u^2 \eta(\eta+1)n}{3}$.

\noindent {\rm (v)} $\mathbb{V}(Z_{i,j}) =\sigma^2 + 3\sigma_u^2 \|\bm s_i \|^2$.   
\end{lem}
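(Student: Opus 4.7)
The plan is to work through the five items in order, since each builds on the previous ones, and to organize everything around the law of iterated expectation/variance with respect to the auxiliary randomness in $\bm c$.

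For (i), I will use iterated expectation. Because $\bm u \leftarrow \mathcal{D}_{\mathcal{R},\sigma_u^2 I_n,-\bm c/2}$, the conditional law of $U_j$ given $\bm c$ is (discrete) Gaussian with mean $-c_j/2$, so $\mathbb{E}(U_j \mid \bm c) = -c_j/2$ and hence $\mathbb{E}(U_j) = -\tfrac12 \mathbb{E}(C_j)$. Part (ii) is an immediate consequence: $\mathbb{E}(2U_j+C_j) = 2\mathbb{E}(U_j) + \mathbb{E}(C_j) = 0$. For (iii) I will apply the law of total variance, $\mathbb{V}(X) = \mathbb{E}\bigl(\mathbb{V}(X\mid \bm c)\bigr) + \mathbb{V}\bigl(\mathbb{E}(X\mid \bm c)\bigr)$: conditionally on $\bm c$ the coordinate $U_j$ has variance $\sigma_u^2$ (covariance $\sigma_u^2 I_n$ is diagonal), so $\mathbb{V}(2U_j+C_j\mid \bm c)=4\sigma_u^2$, while $\mathbb{E}(2U_j+C_j\mid \bm c)=0$, leaving $\mathbb{V}(2U_j+C_j)=4\sigma_u^2$.

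For (iv) and (v) the key ingredient is the covariance matrix $\mathbf{\Sigma}(\mathbf{s},\sigma,\sigma_u)=\sigma^2 I_{nk}-\sigma_u^2\,{\rm circ}(\mathbf{s}){\rm circ}(\mathbf{s})^T$ of $\mathbf{y}$. I need the diagonal entries corresponding to the $(i,j)$-coordinate. Since the rows of a skew circulant matrix ${\rm circ}(\bm s_i)$ are signed permutations of $(s_{i,0},\ldots,s_{i,n-1})$, each diagonal entry of ${\rm circ}(\bm s_i){\rm circ}(\bm s_i)^T$ equals $\|\bm s_i\|^2$. Hence $\mathbb{V}(Y_{i,j}) = \sigma^2 - \sigma_u^2\|\bm s_i\|^2$; in particular, using $\bm s_0 = \bm 1$, $\mathbb{V}(Y_{0,j}) = \sigma^2 - \sigma_u^2$. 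Since $\mathbf{y}$ is sampled independently of $(\bm u,\bm c)$, part (iv) follows by adding (iii): $\mathbb{V}(Z_{0,j}) = (\sigma^2-\sigma_u^2)+4\sigma_u^2 = \sigma^2+3\sigma_u^2$.

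For (v), write $\bm w = 2\bm u+\bm c$ so that $z_{i,j} = y_{i,j} + \sum_{l+m\equiv j\bmod n}\varepsilon_{l,m}\,w_l s_{i,m}$ as in \eqref{EqZ}. The delicate step is to verify that $W_l$ and $W_{l'}$ are uncorrelated for $l\ne l'$: conditionally on $\bm c$ the entries $U_l,U_{l'}$ are independent (diagonal covariance), so $\mathbb{E}(W_lW_{l'}\mid \bm c) = (2(-c_l/2)+c_l)(2(-c_{l'}/2)+c_{l'})=0$, and by (ii) each $W_l$ has mean zero, giving $\mathrm{Cov}(W_l,W_{l'})=0$. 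Since the $s_{i,m}$ are fixed and $\varepsilon_{l,m}^2=1$, the variance of the sum collapses to $\sum_m s_{i,m}^2\cdot\mathbb{V}(W_l)=4\sigma_u^2\|\bm s_i\|^2$; adding the variance of $y_{i,j}$, again by independence of $\mathbf{y}$ from $(\bm u,\bm c)$, yields $\mathbb{V}(Z_{i,j})=(\sigma^2-\sigma_u^2\|\bm s_i\|^2)+4\sigma_u^2\|\bm s_i\|^2=\sigma^2+3\sigma_u^2\|\bm s_i\|^2$. The main obstacle I expect is the bookkeeping around the skew circulant structure together with the pairwise-uncorrelated (but not independent) nature of the $W_l$'s; once these are cleanly established, every item reduces to a one-line computation.
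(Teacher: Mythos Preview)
Your proposal is correct and covers all five parts. The overall skeleton matches the paper's proof: (i) and (ii) are immediate from the definition and linearity; (iv) and (v) both read off $\mathbb{V}(Y_{i,j})=\sigma^2-\sigma_u^2\|\bm s_i\|^2$ from the diagonal of $\mathbf{\Sigma}(\mathbf{s},\sigma,\sigma_u)$ via Lemma~\ref{LemMultivariate}, and then add the variance of the $(2\bm u+\bm c)\bm s_i$ part using independence of $\mathbf{y}$ from $(\bm u,\bm c)$.

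The genuine difference is in (iii). The paper expands $\mathbb{V}(2U_j+C_j)=4\mathbb{E}(U_j^2)+4\mathbb{E}(U_jC_j)+\mathbb{E}(C_j^2)$ and evaluates each term by a case split on $c_j\in\{0,1\}$, obtaining $\mathbb{E}(U_j^2)=\sigma_u^2+\tfrac18$, $\mathbb{E}(U_jC_j)=-\tfrac14$, $\mathbb{E}(C_j^2)=\tfrac12$, and summing. You instead invoke the law of total variance: since $\mathbb{E}(2U_j+C_j\mid\bm c)=0$ and $\mathbb{V}(2U_j+C_j\mid\bm c)=4\sigma_u^2$ for every $\bm c$, the answer drops out in one line with no case analysis. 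Your route is cleaner and makes the mechanism transparent (the center shift $-\bm c/2$ is designed precisely so that the conditional mean of $W_j$ vanishes). The paper's approach, while more pedestrian, has the side benefit of recording the individual mixed moments $\mathbb{E}(U_j^2)$ and $\mathbb{E}(U_jC_j)$, which could be reused elsewhere. A second minor improvement in your write-up is that for (v) you explicitly verify $\mathrm{Cov}(W_l,W_{l'})=0$ for $l\neq l'$ via conditioning on $\bm c$; the paper simply writes $\mathbb{V}\bigl(\sum_l \varepsilon_{l,m}W_l s_{i,m}\bigr)=\sum_l s_{i,m}^2\,\mathbb{V}(W_l)$ without justifying the absence of cross terms, so your version is in fact more complete on this point.
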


\begin{proof}
    Part (i) is clear from the definition of $\mathcal{D}_{\mathcal{R},\sigma_u^2I_n, -\bm c/2}$. For (ii), by Lemma \ref{sum_prod_dist}, we have $\mathbb{E}(2 U_j + C_j) = 2 \mathbb{E}(U_j) + \mathbb{E}(C_j)  = - \mathbb{E}(C_j) + \mathbb{E}(C_j) = 0$ for $0 \le j \le n-1$.

%\medskip
    
\noindent (iii) $\mathbb{V}(2U_j+C_j)=\mathbb{E}((2U_j+C_j)^2)-(\mathbb{E}(2U_j+C_j))^2=4\mathbb{E}(U_j^2)+4\mathbb{E}(U_jC_j)+\mathbb{E}(C_j^2)$ as $\mathbb{E}(2 U_j + C_j) = 0$ by part (ii).  It is clear that $\mathbb{E}(C_j^2)=\frac12$. %and $\sigma_u^2=\mathbb{E}(U_j^2)-(\mathbb{E}(U_j))^2$. So, $\mathbb{E}(U_j^2)=\sigma_u^2+(\mathbb{E}(U_j))^2$. 
We now compute $\mathbb{E}(U_jC_j)$ and $\mathbb{E}(U_j^2)$ as follows.
    Since $\bm u \leftarrow \mathcal{D}_{\mathcal{R},\sigma_u^2 I_n,-\bm c/2}$, then
    $ u_jc_j=\left \{   \begin{array}{ll}
         u_j & \mbox{if $c_j=1$} \\ 0 & \mbox{if $c_j=0$}  \end{array}. \right. $  
    Therefore, $\mathbb{E}(U_jC_j)=\frac12 (-\frac12)=-\frac{1}{4}$. 
    %By (i), we have $(\mathbb{E}(U_j)^2=(-\frac12 \mathbb{E}(C_j))^2=\frac{1}{16}$. Then, $\mathbb{E}(U_j^2)=\sigma_u^2+\frac{1}{16}$. Hence, $\mathbb{V}(2U_j+C_j)=4\mathbb{E}(U_j^2)+4\mathbb{E}(U_jC_j)+\mathbb{E}(C_j^2)=4\cdot (\sigma_u^2+\frac{1}{16})+4\cdot (-\frac{1}{4})+\frac12= \color{red}4\sigma_u^2$.
    Note that $u_j \leftarrow \mathcal{N}(-\frac12, \sigma_u^2)$ if $c_j=1$ (happens with probability $\frac12$), and $u_j \leftarrow \mathcal{N}(0, \sigma_u^2)$ if $c_j=0$. If $u_j \leftarrow \mathcal{N}(-\frac12, \sigma_u^2)$, then $\mathbb{E}(U_j)=-\frac12$ and $\mathbb{E}(U_j^2) = \sigma_u^2 +  (\mathbb{E}(U_j))^2 = \sigma_u^2 + \frac14$. Similarly, if $u_j \leftarrow \mathcal{N}(0, \sigma_u^2)$, then $\mathbb{E}(U_j^2) = \sigma_u^2 +  (\mathbb{E}(U_j))^2 = \sigma_u^2$. Overall, we have $\mathbb{E}(U_j^2)= \frac12(\sigma_u^2 + \frac14 + \sigma_u^2) = \sigma_u^2 + \frac18$.
    Hence, $\mathbb{V}(2U_j+C_j)=4\mathbb{E}(U_j^2)+4\mathbb{E}(U_jC_j)+\mathbb{E}(C_j^2)=4\cdot (\sigma_u^2+\frac{1}{8})+4\cdot (-\frac{1}{4})+\frac12= 4\sigma_u^2$.

%\medskip
    
%\noindent (iv) Since $\bm y_0 \leftarrow \mathcal{D}_{\mathcal{R}, \mathbf{\Sigma}(\mathbf{s},\sigma, \sigma_u) , \bm 0}$, then by Lemma \ref{LemMultivariate} (ii),  $Y_{0,j}$ is the normal distribution $\mathcal{N}(0, \sigma^2-\sigma_u^2 \|\bm s_0\|^2)$ for $0 \le j \le n-1$. In addition, as $\bm s_0=\bm 1 \in \mathcal{R}$, so $Y_{0,j}$ is the normal distribution $\mathcal{N}(0, \sigma^2-\sigma_u^2)$.  Therefore, for $0 \le j \le n-1$, $\mathbb{V}(Z_{0,j})=\mathbb{V}(Y_{0,j}+(2U_j+C_j))=\mathbb{V}(Y_{0,j})+\mathbb{V}(2U_j+C_j)$. By part (iii), we have $\mathbb{V}(Z_{0,j})=\sigma^2-\sigma_u^2+4\sigma_u^2=\sigma^2+3\sigma_u^2$. 
\noindent (iv) 
Let $0 \le j \le n-1$.  
Recall that $\mathbf{y}$ is sampled from $\mathcal{D}_{\mathcal{R}^k, \mathbf{\Sigma}(\mathbf{s},\sigma, \sigma_u) , \bm 0}$ and $\bm s_0 = \bm 1$. 
By Lemma \ref{LemMultivariate} (ii),  $Y_{0,j}$ is the normal distribution $\mathcal{N}(0, \sigma^2-\sigma_u^2 \|\bm s_0\|^2) = \mathcal{N}(0, \sigma^2-\sigma_u^2)$.
Thus, $\mathbb{V}(Z_{0,j})=\mathbb{V}(Y_{0,j}+(2U_j+C_j))=\mathbb{V}(Y_{0,j})+\mathbb{V}(2U_j+C_j) = \sigma^2 - \sigma_u^2 + 4 \sigma_u^2 = \sigma^2 + 3 \sigma_u^2$.
%Since $\bm y_0 \leftarrow \mathcal{D}_{\mathcal{R}, \mathbf{\Sigma}(\mathbf{s},\sigma, \sigma_u) , \bm 0}$, then by Lemma \ref{LemMultivariate} (ii),  $Y_{0,j}$ is the normal distribution $\mathcal{N}(0, \sigma^2-\sigma_u^2 \|\bm s_0\|^2)$ for $0 \le j \le n-1$. In addition, as $\bm s_0=\bm 1 \in \mathcal{R}$, so $Y_{0,j}$ is the normal distribution $\mathcal{N}(0, \sigma^2-\sigma_u^2)$.  Therefore, for $0 \le j \le n-1$, $\mathbb{V}(Z_{0,j})=\mathbb{V}(Y_{0,j}+(2U_j+C_j))=\mathbb{V}(Y_{0,j})+\mathbb{V}(2U_j+C_j)$. By part (iii), we have $\mathbb{V}(Z_{0,j})=\sigma^2-\sigma_u^2+4\sigma_u^2=\sigma^2+3\sigma_u^2$.

%\medskip
    
\noindent (v) %For $1 \le i \le k-1$, {\color{red}$\bm y_i$ is sampled from $\mathcal{D}_{\mathcal{R}^k, \mathbf{\Sigma}(\bm{s}_i,\sigma, \sigma_u) , \bm 0}$}, then by Lemma \ref{LemMultivariate} (ii), 
As $\mathbf{y}$ is sampled from $\mathcal{D}_{\mathcal{R}^k, \mathbf{\Sigma}(\mathbf{s},\sigma, \sigma_u) , \bm 0}$, then by Lemma \ref{LemMultivariate} (ii),
$Y_{i,j}$ is the normal distribution $\mathcal{N}(0, \sigma^2-\sigma_u^2 \| \bm s_i\|^2)$ for $1 \le i \le k-1$ and $0 \le j \le n-1$. We have $\mathbb{V}(Y_{i,j})=\sigma^2-\sigma_u^2 \|\bm s_i \|^2$. Letting $w_j=2u_j+c_j$, we have $z_{i,j}=y_{i,j}+\sum_{l+m=j \bmod n} \varepsilon_{l,m} w_ls_{i,m}$. 
Hence 
\begin{align*}
   \mathbb{V}(Z_{i,j}) &= \mathbb{V}\left(Y_{i,j}+\sum_{l+m=j \bmod n} \varepsilon_{l,m} W_ls_{i,m}\right) \\
   &= \mathbb{V}(Y_{i,j})+\mathbb{V}(\sum_{l+m=j \bmod n} \varepsilon_{l,m} W_ls_{i,m}) \\
   &= \sigma^2-\sigma_u^2 \|\bm s_i \|^2+ \sum_{l+m=j \bmod n} s_{i,m}^2 \mathbb{V}(W_l)  \\
   &=  \sigma^2-\sigma_u^2\|\bm s_i \|^2 + 4\sigma_u^2 \|\bm s_i \|^2 \\
   &= \sigma^2 + 3\sigma_u^2 \|\bm s_i \|^2.
\end {align*}
%  where $\mathbb{V}(Y_{i,j})=\sigma^2-\sigma_u^2 \|\bm s_i \|^2\approx\sigma^2-\sigma_u^2\frac{n \eta(\eta+1)}{3}$ and
% % \begin{eqnarray*}
% % \mathbb{V}(\sum_{l+m=j \bmod n} \varepsilon_{l,m} W_lS_{i,m}) &=& \sum_{l+m=j \bmod n} \mathbb{V}(W_l) \mathbb{V}(\varepsilon_{l,m} S_{i,m}) \\
% % &=& 4\sigma_u^2 * \frac{\eta(\eta+1)}{3} * n
% % \end{eqnarray*}
% \begin{align*}
% \mathbb{V}(\sum_{l+m=j \bmod n} \varepsilon_{l,m} W_lS_{i,m}) &= \sum_{l+m=j \bmod n} \mathbb{V}(W_l) \mathbb{V}(\varepsilon_{l,m} S_{i,m}) \\
% &= 4\sigma_u^2 * \frac{\eta(\eta+1)}{3} * n
% \end{align*}
% \noindent as $\mathbb{V}(W_l)=4 \sigma_u^2$ and $\mathbb{V}(S_{i,m})=\frac{\eta(\eta+1)}{3}$. {\color{red}(Re-arrange to make it clearer)}
\end{proof}

%\medskip

\begin{thm} \label{exp_ratio}
   Let $0 \le j \le n-1$, \ $1 \le i \le k-1$, and $w_j=2u_j+c_j$. Then $\mathbb{E}(\frac{Z_{i,j}}{Z_{0,0}})=s_{i,j}\cdot \frac{\sigma^2_{W_0}-\sigma^2_u}{\sigma_{Z_{0,0}}^2}$, where $\sigma_{W_0}$ and $\sigma_{Z_{0,0}}$ are the standard deviation of $W_0$ and $Z_{0,0}$ respectively. 
\end{thm}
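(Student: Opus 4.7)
The plan is to apply Proposition \ref{ratio_normal} to $Z_{i,j}/Z_{0,0}$ and reduce the theorem to the computation of a single covariance $\mathbb{E}(Z_{i,j}Z_{0,0})$. Under the random-oracle idealization, $\mathbf{y}$ may be treated as independent of $\bm w = 2\bm u + \bm c$; moreover, conditioning on $\bm c$ gives $\bm w \mid \bm c \sim \mathcal{N}(\bm 0, 4\sigma_u^2 I_n)$ regardless of $\bm c$, so the coordinates $W_l$ are i.i.d.\ with $\sigma_{W_l}^2 = \sigma_{W_0}^2 = 4\sigma_u^2$ and $(Z_{i,j}, Z_{0,0})$ is jointly centered Gaussian. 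Proposition \ref{ratio_normal} then gives $Z_{i,j}/Z_{0,0}\sim \mathcal{C}(\alpha,\beta)$ with location parameter $\alpha = \mathbb{E}(Z_{i,j}Z_{0,0})/\sigma_{Z_{0,0}}^2$; since in practice the ratio is observed through the bound $\|\mathbf{z}\|\le B_z$, the corresponding truncated Cauchy has finite expectation equal to $\alpha$, so $\mathbb{E}(Z_{i,j}/Z_{0,0}) = \alpha$.

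To evaluate $\mathbb{E}(Z_{i,j}Z_{0,0})$, I would first exploit $\bm s_0 = \bm 1$ to write $Z_{0,0} = Y_{0,0} + W_0$ and, using equation (\ref{EqZ}), $Z_{i,j} = Y_{i,j} + \sum_{l=0}^{n-1} W_l (S_i)_{j,l}$, where $S_i$ denotes the skew circulant matrix of $\bm s_i$. Independence of $\mathbf{y}$ and $\bm w$, together with $\mathbb{E}(Y_{\cdot,\cdot}) = \mathbb{E}(W_\cdot) = 0$, splits the covariance into a $\mathbf{y}$-part $\mathbb{E}(Y_{i,j}Y_{0,0})$ and a $\bm w$-part. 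The $\bm w$-part is immediate: pairwise independence of the $W_l$ kills all terms with $l\neq 0$, and the surviving term contributes $(S_i)_{j,0}\,\sigma_{W_0}^2 = s_{i,j}\,\sigma_{W_0}^2$, since column $0$ of the skew circulant of $\bm s_i$ is $(s_{i,0},\ldots,s_{i,n-1})^T$.

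The $\mathbf{y}$-part is the $((0,0),(i,j))$-entry of $\mathbf{\Sigma}(\mathbf{s},\sigma,\sigma_u) = \sigma^2 I_{nk} - \sigma_u^2\,\mathrm{circ}(\mathbf{s})\,\mathrm{circ}(\mathbf{s})^T$. For $i\ge 1$ the identity part vanishes, and reading $\mathrm{circ}(\mathbf{s})$ as the $nk\times n$ vertical stack of the skew circulants $S_0,\ldots,S_{k-1}$, the entry equals $-\sigma_u^2 \sum_l (S_0)_{0,l}(S_i)_{j,l}$. Because $\bm s_0 = \bm 1$ forces $S_0 = I_n$, only the $l=0$ term survives and gives $\mathbb{E}(Y_{i,j}Y_{0,0}) = -\sigma_u^2 s_{i,j}$. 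Adding the two contributions yields $\mathbb{E}(Z_{i,j}Z_{0,0}) = (\sigma_{W_0}^2 - \sigma_u^2)s_{i,j}$, and dividing by $\sigma_{Z_{0,0}}^2$ gives the claimed formula. The main obstacle is unpacking the block structure of $\mathrm{circ}(\mathbf{s})\mathrm{circ}(\mathbf{s})^T$ and recognizing that the normalization $\bm s_0 = \bm 1$ is precisely what collapses both the $\mathbf{y}$-part and the $\bm w$-part to a single $s_{i,j}$ term, so that the two covariances combine cleanly without interference from the other coefficients of $\bm s_i$.
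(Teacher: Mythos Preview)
Your proposal is correct and follows essentially the same route as the paper: both apply Proposition~\ref{ratio_normal} and the truncated-Cauchy expectation to reduce to computing $\mathbb{E}(Z_{i,j}Z_{0,0})$, then split this covariance into the $\mathbf{y}$-part $\mathbb{E}(Y_{i,j}Y_{0,0})=-\sigma_u^2 s_{i,j}$ (read off from $\mathbf{\Sigma}(\mathbf{s},\sigma,\sigma_u)$) and the $\bm w$-part $s_{i,j}\sigma_{W_0}^2$ (using independence of the $W_l$). Your write-up is in fact a bit more explicit than the paper's about how $\bm s_0=\bm 1$ forces $S_0=I_n$ and collapses the relevant block of $\mathrm{circ}(\mathbf{s})\mathrm{circ}(\mathbf{s})^T$ to $s_{i,j}$, but the argument is the same.
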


\begin{proof}
    Recall that $Z_{i,j}$ is a normal distribution. Moreover, for $i,j \ne 0$, we observe that $Z_{i,j}$ and $Z_{0,0}$ are correlated as they have a common $w_0$ term.
    Let $\mu_{Z_{i,j}}, \mu_{Z_{0,0}}$ and $\sigma_{Z_{i,j}}, \sigma_{Z_{0,0}}$ be the mean and standard deviation of $Z_{i,j}, Z_{0,0}$ respectively.
    By Proposition \ref{ratio_normal}, the ratio $\frac{Z_{i,j}}{Z_{0,0}}$ of $Z_{i,j}$ and $Z_{0,0}$ is the Cauchy distribution $C(\alpha_{i,j}, \beta_{i,j})$, where 
    $\alpha_{i,j}= \rho_{i,j} \frac{\sigma_{Z_{i,j}}}{\sigma_{Z_{0,0}}}$ ,  $\beta_{i,j}=\frac{\sigma_{Z_{i,j}}}{\sigma_{Z_{0,0}}} \sqrt{1-\rho_{i,j}^2}$ , and  $\rho_{i,j}=\frac{\mathbb{E}((Z_{i,j}-\mu_{Z_{i,j}})(Z_{0,0}-\mu_{Z_{0,0}}))}{\sigma_{Z_{i,j}} \sigma_{Z_{0,0}}}$.
    By considering $\frac{Z_{i,j}}{Z_{0,0}}$ as a truncated Cauchy distribution, we have $\mathbb{E}(\frac{Z_{i,j}}{Z_{0,0}} ) = \alpha_{i,j}$.

    We shall now derive a formula for $\alpha_{i,j}$. Let $\bm s_i=(s_{i,0}, \ldots, s_{i,n-1})$ and $w_j=2u_j+c_j$. Recall from Equation \eqref{EqZ} that $z_{0,0}= y_{0,0}+w_0$ and $z_{i,j}=y_{i,j}+w_0s_{i,j}+\sum_{l+m=j, l \ne 0} \varepsilon_{l,m} w_ls_{i,m}$. We first compute $\rho_{i,j}$ as follows
    \begin{eqnarray*}
  \rho_{i,j}  &=& \frac{\mathbb{E}((Z_{i,j}-\mu_{Z_{i,j}})(Z_{0,0}-\mu_{Z_{0,0}}))}{\sigma_{Z_{i,j}} \sigma_{Z_{0,0}}} = \frac{\mathbb{E}(Z_{i,j}Z_{0,0})}{\sigma_{Z_{i,j}} \sigma_{Z_{0,0}}} \\
   &=& \frac{1}{\sigma_{Z_{i,j}} \sigma_{Z_{0,0}}}\mathbb{E} {\Big (} {\Big (} Y_{i,j}+W_0s_{i,j} \\
   & & +\sum_{l+m=j \bmod n, l \ne 0} \varepsilon_{l,m} W_ls_{i,m} {\Big )} (Y_{0,0}+W_0) {\Big )} \\
   &=& \frac{\mathbb{E}(Y_{i,j}Y_{0,0}) + \mathbb{E}(W_0^2s_{i,j})}{\sigma_{Z_{i,j}} \sigma_{Z_{0,0}}}=\frac{-s_{i,j}\sigma^2_u+s_{i,j}\mathbb{E}(W_0^2)}{\sigma_{Z_{i,j}} \sigma_{Z_{0,0}}} \\
   &=& \frac{s_{i,j}(\sigma_{W_0}^2-\sigma^2_u)}{\sigma_{Z_{i,j}} \sigma_{Z_{0,0}}}.
\end{eqnarray*}
    In the above, we used the fact that $W_0$ and $Y_{i,j}$ (for $0 \le i \le k-1$, $0 \le j \le n-1$) are independent, so that $\mathbb{E}(W_0 Y_{i,j}) = 0$; and similarly $W_0$ and $W_{l}$ (for $1 \le l \le n-1$) are independent, so that $\mathbb{E}(W_0W_{l}) = 0$ for $l \ne 0$.

    We can now compute $\mathbb{E}(\frac{Z_{i,j}}{Z_{0,0}})=\alpha_{i,j}$ as follows: 
\begin{eqnarray*}        
    \mathbb{E}\left(\frac{Z_{i,j}}{Z_{0,0}}\right) &=& \alpha_{i,j} \\
    &=& \rho_{i,j} \frac{\sigma_{Z_{i,j}}}{\sigma_{Z_{0,0}}} 
    = \frac{s_{i,j}(\sigma_{W_0}^2-\sigma^2_u)}{\sigma_{Z_{i,j}}  \sigma_{Z_{0,0}}} \frac{\sigma_{Z_{i,j}}}{\sigma_{Z_{0,0}}} \\
   &=& s_{i,j} \cdot \frac{\sigma_{W_0}^2-\sigma^2_u}{\sigma_{Z_{0,0}}^2}.
\end{eqnarray*}
    
\end{proof}

By Theorem \ref{exp_ratio}, $\mathbb{E}(\frac{Z_{i,j}}{Z_{0,0}})=s_{i,j} \cdot \frac{\sigma_{W_0}^2-\sigma^2_u}{\sigma_{Z_{0,0}}^2}$ for $1 \le i \le k-1, \, 0 \le j \le n-1$. To determine the required number of signatures for the ratio attack, let $L > 0$ and consider the truncated Cauchy distribution $\mathcal{C}_{i,j}=\mathcal{C}(\alpha_{i,j}, \beta_{i,j} \, | \, [\alpha_{i,j}-L, \alpha_{i,j}+L])$ with $\alpha_{i,j}=s_{i,j} \alpha_*$ and $\beta_{i,j}=\frac{\sigma_{Z_{i,j}}}{\sigma_{Z_{0,0}}}\sqrt{1-(\frac{s_{i,j} \, (\sigma_{W_0}^2-\sigma^2_u)}{\sigma_{Z_{0,0}} \, \sigma_{Z_{i,j}}})^2}$, where $\alpha_*:=\frac{\sigma_{W_0}^2-\sigma^2_u}{\sigma_{Z_{0,0}}^2}$ and by Lemma \ref{LemEV} (v), $\sigma_{Z_{i,j}}=\sqrt{\sigma^2+3\sigma_u^2 \|\bm s_i \|^2} \approx \sqrt{\sigma^2+\sigma_u^2\eta(\eta+1)n}$. Let $\bar{U}_1,\ldots, \bar{U}_{N}$ be identically distributed truncated Cauchy distribution $\mathcal{C}_{i,j}$ and $\hat{U}=\frac{1}{N}\sum_{\hat{k}=1}^N \bar{U}_{\hat{k}}$ with mean $\alpha_{i,j}$ and standard deviation $\sigma_{i,j}$. 
By Proposition \ref{truncated_cauchy_to_normal}, $\sqrt{N}(\hat{U}-\alpha_{i,j})$ is approximate to the normal distribution $\mathcal{N}(0, \sigma_{i,j}^2)$, where $\sigma_{i,j,L}= \sqrt{\frac{\beta_{i,j} L}{\arctan(\frac{L}{\beta_{i,j}})}-\beta_{i,j}^2}$.  

Let $\sigma_z$ be the standard deviation of $Z_{i,j}$. Take $L=\lceil  l \sigma_z \rceil$ for a real number $l >2$, then by Lemma \ref{L12}, we have $\Pr(|z| > L) \le \Pr(|z| > l \sigma_z) < e^{-l^2/2}$.
    Note that the attack is successful if $|\hat{U}-\alpha_{i,j}| < \frac{\alpha_*}{2}$. To achieve this, according to Theorem \ref{ThmNSample}, we need $N_{i,j,L}=(\frac{\omega \sigma_{i,j,L}}{\alpha_*/2})^2$ samples.
    To ensure that all the $N_{i,j,L}$ samples lie within the truncated interval $[\alpha_{i,j}-L, \alpha_{i,j}+L]$, we impose the condition $N_{i,j,L} \cdot e^{-l^2/2} < 1$.
    We thus take $L_*=\min_l \{ L=\lceil  l \sigma_z \rceil \, | \, N_{i,j,L} \cdot e^{-l^2/2} < 1 \}$ and compute $p_{i,j}=\displaystyle\frac{1}{\sigma_{i,j,L_*}\sqrt{2\pi}} \int_{-\alpha_* /2}^{\alpha_* /2} e^{-\frac{1}{2}(\frac{t}{\sigma_{i,j,L_*}})^2} {\rm d}t$.  Let $p_*=\min_{i \ne 0, j} \{p_{i,j} \}$. Then $O(1/p_{*}^2)$ will give an approximation of the required number of signatures. Thus, we have the following theorem.

\begin{thm} \label{ThmN}
   Let $l$ be a real number such that $l > 2$ and $\sigma_z$ be the standard deviation of $Z_{i,j}$ for $1 \le i \le k-1, 0 \le j \le n-1$. Let $L=\lceil  l \sigma_z \rceil$ and $L_*=\min_l \{ L=\lceil  l \sigma_z \rceil \, | \, N_{i,j,L} \cdot e^{-l^2/2} < 1 \}$, where $N_{i,j,L}$ is defined above. Then $O(1/p_{*}^2)$ gives an approximation for the required number of signatures for the key recovery attack, where $p_*=\min_{i\ne 0,j} \{p_{i,j}\}$ and $p_{i,j}$ is as defined above. 
\end{thm}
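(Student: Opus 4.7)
The plan is to thread together the ingredients assembled in the paragraph immediately preceding the statement. First I would invoke Theorem \ref{exp_ratio} to fix $\mathbb{E}(Z_{i,j}/Z_{0,0}) = s_{i,j}\alpha_{*}$ with $\alpha_{*} = (\sigma_{W_0}^2-\sigma_u^2)/\sigma_{Z_{0,0}}^2$. Since $s_{i,j}$ is an integer, a sample mean $\hat U$ of the ratio identifies $s_{i,j}$ correctly as soon as $|\hat U - s_{i,j}\alpha_*| < \alpha_*/2$; this is the single numerical accuracy target to hit, so the whole proof reduces to estimating the number of samples needed to achieve it.

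Second, I would set up the truncation that makes this sample mean well-defined. The raw ratio is Cauchy, so I restrict each draw to $[\alpha_{i,j}-L,\alpha_{i,j}+L]$ with $L = \lceil l\sigma_z\rceil$ and work with $\mathcal{C}_{i,j}$. Proposition \ref{truncated_cauchy_to_normal} then gives, after $N$ draws, $\sqrt{N}(\hat U - \alpha_{i,j})$ approximating $\mathcal{N}(0, \sigma_{i,j,L}^2)$. Lemma \ref{L12} bounds the probability that an untruncated normal draw escapes the window by $e^{-l^2/2}$, so a union bound shows that the choice $L_* = \min_l\{L : N_{i,j,L}\, e^{-l^2/2} < 1\}$ is precisely the smallest window for which we expect all $N$ samples to fall inside. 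Feeding $d = \alpha_*/2$ and $\sigma = \sigma_{i,j,L_*}$ into Theorem \ref{ThmNSample} yields $N_{i,j,L_*} = (2\omega\sigma_{i,j,L_*}/\alpha_*)^2$ for success probability $\Phi(\omega)-\Phi(-\omega)$.

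Third, I would convert this sample count into a function of $p_{i,j}$. By construction $p_{i,j}$ is the mass of $\mathcal{N}(0,\sigma_{i,j,L_*}^2)$ on $[-\alpha_*/2,\alpha_*/2]$. In the attack-relevant regime where $\sigma_{i,j,L_*}$ is much larger than $\alpha_*$ (so that many signatures are needed), the Gaussian density is essentially flat on that short interval, hence $p_{i,j} \approx \alpha_*/(\sigma_{i,j,L_*}\sqrt{2\pi})$, or equivalently $\sigma_{i,j,L_*}/\alpha_* = \Theta(1/p_{i,j})$. Substituting into $N_{i,j,L_*} = (2\omega\sigma_{i,j,L_*}/\alpha_*)^2$ gives $N_{i,j,L_*} = O(1/p_{i,j}^2)$ at any fixed target success probability, and taking the worst coordinate $p_* = \min_{i\ne 0,j}p_{i,j}$ ensures the same sample count is simultaneously enough for every coordinate of the secret key, yielding the claimed $O(1/p_*^2)$ bound.

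The main obstacle I anticipate is not the chain of CLT and tail estimates, which follow mechanically once $L_*$ is fixed, but the small-$p_{i,j}$ asymptotics underpinning the final substitution: the Gaussian integral defining $p_{i,j}$ must be linearized with explicit control on the remainder, and one has to argue that the confidence level $\omega$ used to promote a single-sample probability $p_{i,j}$ into a high-probability averaged estimate can be taken to be $O(1)$ uniformly in $(i,j)$. Once that dependence is isolated and the linearization justified for the parameter ranges of interest in \cite{DPS23}, the $O(1/p_*^2)$ scaling follows by direct substitution.
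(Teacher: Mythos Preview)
Your proposal is correct and follows essentially the same route as the paper, which presents the entire argument in the paragraph immediately preceding the theorem rather than in a separate proof environment: invoke Theorem~\ref{exp_ratio}, pass to the truncated Cauchy distribution, apply Proposition~\ref{truncated_cauchy_to_normal} and Lemma~\ref{L12} to justify the choice of $L_*$, feed $d=\alpha_*/2$ into Theorem~\ref{ThmNSample}, and then take the worst-case $p_*$. Your third step, linearizing the Gaussian integral to convert $N_{i,j,L_*}=(2\omega\sigma_{i,j,L_*}/\alpha_*)^2$ into $O(1/p_{i,j}^2)$, actually supplies more justification than the paper does for the final $O(1/p_*^2)$ claim, which the paper simply asserts.
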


\section{Recovering Secret Key of G+G Convoluted Gaussian Signature}\label{SectAttack}

In this section, we describe our technique to recover the secret key of the G+G convoluted Gaussian signature.

% Let $\alpha_*=(\frac{\sigma_{W_0}}{\sigma_{Z_{0,0}}})^2$, where $w_0=2u_0+c_0$. We may compute $\alpha_*$ easily using Lemma \ref{LemEV} (iii) and (iv). By Theorem \ref{exp_ratio}, we have $\mathbb{E}(\frac{Z_{i,j}}{Z_{0,0}})=s_{i,j}\alpha_*$ for $1 \le i \le k-1, \, 0 \le j \le n-1$. Since $y_{0,j}$ follows a normal distribution with standard deviation $\sigma$ and $u_0$ is from normal distribution with standard deviation  $\sigma_u$ where $\sigma > \sigma_u$, then by Lemma \ref{LemEV} (iii) and (iv), we have $\alpha_* < 1$.  By setting interval $[s_{i,j}\alpha_*-\alpha_*/2, s_{i,j}\alpha_*+\alpha_*/2)$ for deciding $s_{i,j}$ in $\mathbb{E}(\frac{Z_{i,j}}{Z_{0,0}})=s_{i,j}\alpha_*$.  This can be done by collecting a number $N$ of signatures $\bm z^{(1)}, \ldots, \bm z^{(N)}$ such that $z_{0,0}^{(i)} \ne 0$ for $1 \le i \le N$  and compute $\mathbb{E}(\frac{Z_{i,j}}{Z_{0,0}}) \approx \frac{1}{N}\sum_{i=1}^N \frac{z_{i,j}^{(i)}}{z_{0,0}^{(i)}}$.  In this case, the secret key of the  G+G convoluted Gaussian signature can be recovered. 

By Theorem \ref{exp_ratio}, we have $\mathbb{E}(\frac{Z_{i,j}}{Z_{0,0}} ) = s_{i,j} \alpha_*$, %\begin{equation}\label{EqRatio} \mathbb{E}(\frac{Z_{i,j}}{Z_{0,0}} ) = s_{i,j} \alpha_*,\end{equation}
where $\alpha_* =\frac{\sigma^2_{W_0}-\sigma^2_u}{\sigma_{Z_{0,0}}^2}$ and $W_0=2U_0+C_0$. We may compute $\alpha_*$ using Lemma \ref{LemEV} (iii) and (iv). Moreover, by this lemma, we observe that $\alpha_* = \frac{\sigma_{W_0}^2-\sigma^2_u}{\sigma_{Z_{0,0}}^2} = \frac{3 \sigma_u^2}{\sigma^2 + 3 \sigma_u^2} <1$ as $\sigma > \sigma_u$.

Since $\alpha_*$ can be computed, the equation $\mathbb{E}(\frac{Z_{i,j}}{Z_{0,0}} ) = s_{i,j} \alpha_*$ allows us to recover the secret key $s_{i,j}$ by collecting a number $N$ of signatures $\mathbf{z}^{(1)}, \ldots, \mathbf{z}^{(N)}$ such that $z_{0,0}^{(\hat{k})} \ne 0$ for $1 \le \hat{k} \le N$  and computing $\mathbb{E}(\frac{Z_{i,j}}{Z_{0,0}}) \approx \frac{1}{N}\sum_{\hat{k}=1}^N \frac{z_{i,j}^{(\hat{k})}}{z_{0,0}^{(\hat{k})}}$.
We can then decide that the secret key value is $s_{i,j}$ if $\frac{1}{N}\sum_{\hat{k}=1}^N \frac{z_{i,j}^{(\hat{k})}}{z_{0,0}^{(\hat{k})}} \in [s_{i,j}\alpha_*-\frac{\alpha_*}{2}, s_{i,j}\alpha_*+\frac{\alpha_*}{2})$. In this way, the secret key of the G+G convoluted Gaussian can be recovered.

%\medskip

As mentioned in Section \ref{G+G}, there is no concrete parameter given for $\mathbf{z}= \mathbf{y} + (2\bm u + \bm  c)\mathbf{s}$ in \cite{DPS23}. Hence, in order to provide a proof-of-concept of our ratio attack,  we carried out some experiments on our ratio attack on the G+G convoluted Gaussian signature for different parameters.  As sampling from multivariate discrete Gaussian distributions in SageMath is time-consuming, it takes a long time to generate a sufficient number of signatures for simulating the attack. We thus simulate the ratio attack on some scaled-down parameters. 
Note that this does not affect the validity of our ratio attack.

%\medskip

In our SageMath \cite{SMath} simulation, we take $k=3, m=1$ and the other parameters are given in Table \ref{TbParamSimul}.
%\begin{table}[h]
%    \centering
%    \begin{tabular}{|c|c|c|c|c|c|c|c|c|c|} \hline
%       $n$ & $\sigma_u$ & $\sigma$ & $\sigma_{w_0}$ & $\sigma_{z_{0,0}}$ & $\alpha_*=\frac{\sigma_{w_0}}{\sigma_{z_{0,0}}}$ & $N_{L_*}$ & $O(1/p_*^2)$ & exper. $N$ & Times \\ \hline
%        $64$ & $4$ & $96$ & $8.0$ & $96.24$ & $0.006908462$ & $601,043,873$ & $62,375,668$ & $91.78$ mil & ~52 hrs \\ \hline 
%        $128$ & $1.5$ & $30.0$ & $3.0$ & $30.11$ & $0.00992555$ & $117,255,954$ & $12,168,693$ & $26.5$ mil & 1 hrs \\ \hline
%        $256$ & $1.0$ & $34.0$ & $2.0$ & $34.044$ & $0.003451251$ & $1,024,752,538$ & $106,347,685$ & $121.35$ mil & 160.09 hrs \\ \hline
%    \end{tabular}
%    \caption{The scaled-down parameters and the number $N$ of signatures used in the proof-of-concept Sagemath simulation of the attack}
%    \label{TbParamSimul}
%\end{table}

%\newpage

\begin{table}[h]
    \centering
    \caption{The parameters and the number $N$ of signatures used in the proof-of-concept of the ratio attack} \label{TbParamSimul}
    \begin{tabular}{|c|c|c|c|} \hline
  $n$        & $64$     &  $128$   &  $128$  \\ \hline
  $\eta$     & $1$      &  $1$     &  $1$    \\ \hline
  $\sigma_u$ &  $1.0$   &  $0.4$   &  $1.0$   \\ \hline
  $\sigma$   &  $15.0$  & $9.5$   & $24.0$  \\ \hline
  $\sigma_{w_0}$ &  $2.0$    &  $0.8$   &  $4.0$   \\ \hline
  $\sigma_{z_{0,0}}$ &  $15.09$    & $9.525$    &  $24.06$  \\ \hline
  $\alpha_*=\frac{\sigma^2_{w_0}-\sigma^2_u}{\sigma_{z_{0,0}}^2}$ &  $0.0131578$  &  $0.0052904$   &  $0.0051813$  \\ \hline
  $1/p_*^2$ &  $3.15$ mil   & $11.83$ mil   &  $32.12$ mil  \\ \hline
  \# signatures  &  $9.34$ mil     &  $30.66$ mil   &  $125.88$ mil \\ \hline
  Attack Time &   $3.54$ hrs   &   $19.17$ hrs  &  $94.53$ hrs  \\ \hline
    \end{tabular}    
\end{table}

In Table \ref{TbParamSimul}, the number of signatures $1/p_*^2$ is obtained from Theorem \ref{ThmN} with $\omega=3.8905$. The "\# signatures" is the simulated number of signatures required to recover the secret key. The attack time is not optimized and includes the time for generating the signatures.

The important parameters for the ratio attack are $\alpha_*, \sigma_u, \sigma, \eta$ as they determine the required number of signatures. To prevent the ratio attack, we need to set $\alpha_*$ to be small and  the required number of signatures to be large, say $\ge 2^{64}$.  We give one parameter by setting $1/p_*^2 \approx 2^{64}$ for $n=256, \eta=1$ and choosing $\sigma_u=15.0, \sigma=5.3\times10^4$. Then $\alpha_*=2.40298 \times10^{-7}$ and $q$ is obtained from Lemma \ref{L12} by setting $t=9.5$ such that $q >5.03 \times10^5$.
This indicates that the parameters must be chosen to be large, resulting in large key and signature sizes in order to avoid the ratio attack.
It is noted that this parameter is not meant for setting security level, it is just to prevent the ratio attack which needs to collect at least $2^{64}$ signatures.  

%In the following table, we give some parameters for $\alpha_*, \sigma_u, \sigma$ by setting $1/p_*^2 \approx 2^{64}$ for $n=256, \eta=1$. 

%\begin{table}[h] 
%    \centering
%    \caption{The parameters for $1/p_*^2 \approx 2^{64}$} \label{TbParamSimu3}
%    \begin{tabular}{|c|c|c|c|} \hline
%             & $(i)$     &  $(ii)$   &  $(iii)$  \\ \hline
%  $\sigma_u$ &  $5000.0$   &  $100.0$   &  $50.0$   \\ \hline
%  $\sigma$   &  $5.0*10^6$  & $2.2*10^5$   & $1.4*10^5$   \\ \hline
%  $\sigma_{w_0}$ &  $4.0$    &  $4.0$   &  $4.0$  \\ \hline
%  $\sigma_{z_{0,0}}$ &  $30.19$    & $64.09$    &  $75.07$  \\ \hline
%  $\alpha_*=\frac{\sigma_{w_0}}{\sigma_{z_{0,0}}}$ &  $4.0*10^{-6}$  &  $8.26*10^{-7}$ &  $5.10*10^{-7}$  \\ \hline
%    $N_{L_*}$ &  $40,857,969$    &  $1,450,060,291$   &  $3,674,775,764$  \\ \hline
%  $O(1/p_*^2)$ &  $4,240,195$   & $150,485,653$    &  $318,364,164$   \\ \hline
%  exper. $N$ &  $4.42$ mil     &  $206.70$ mil   &  $--$ mil  \\ \hline
%  Total Times &   $45.88$ mins   &   $65.12$ hrs  &  -- hrs  \\ \hline
%  Attack Times &   $45.88$ mins   &   $18.13$ hrs  &  -- hrs  \\ \hline
%      $q$     &  $> 5.0*10^7$     &  $> 2.2*10^6$     &  $>1.4*10^6 $       \\ \hline
%    \end{tabular}    
%\end{table}

%In Table \ref{TbParamSimu3}, $q$ is obtained from Lemma \ref{L12} by setting $t=5$. It is noted that these parameters do not mean for setting security level, it is just to prevent the ratio attack which needs to collect at least $2^{64}$ signatures. 

\section{Comments on updated Module G+G Signature in the e-print \cite{DPS23e}} \label{Sect_com_DPS23e}

To realize the generic G+G convoluted Gaussian signature, a concrete example of the module-LWE is given in the updated version in the eprint \cite{DPS23e}. Its signature is $\mathbf{z} = \mathbf{y} + (\bm \zeta \bm u + \bm c) \mathbf{s}$, where $\bm \zeta=1+x^{n/2} \in \mathcal{R}$. %There are three major changes from the Asicrypt 2023 paper \cite{DPS23} to the revised eprint \cite{DPS23e} as follows.
The revised eprint \cite{DPS23e} introduces four major changes compared to the Asiacrypt 2023 paper \cite{DPS23}:

\medskip

\noindent ($\mathcal{C}1$) $\bm u \leftarrow \mathcal{D}_{\mathcal{R}, \sigma_u I_n, - \bm \zeta^* \bm c /2 }$ where $\bm \zeta^*=1-x^{n/2}$. 

\medskip

\noindent ($\mathcal{C}2$) The covariance matrix $\mathbf{\Sigma}(\mathbf{s},\sigma,\sigma_u):=\sigma^2 I_{nk} - \sigma_u^2 {\rm circ}(\bm \zeta \mathbf{s}){\rm circ}(\bm \zeta \mathbf{s})^T$, where $\mathbf{s}=(\bm{s}_1,\cdots, \bm{s}_k)$, where 

%\noindent (2) From $\bm y_i \leftarrow \mathcal{D}_{\mathcal{R}, \mathbf{\Sigma}(\bm{s}_i,\sigma,\sigma_u),  \bm 0 }$ where $\mathbf{\Sigma}(\bm{s}_i,\sigma,\sigma_u):=\sigma^2 I_n - \sigma_u^2 {\rm circ}(\bm s_i){\rm circ}(\bm s_i)^T$ to $\bm y \leftarrow \mathcal{D}_{\mathcal{R}^k, \mathbf{\Sigma}(\bm{s},\sigma,\sigma_u),  \bm 0 }$ where $\mathbf{\Sigma}(\bm{s},\sigma,\sigma_u):=\sigma^2 I_{nk} - \sigma_u^2 {\rm circ}(\bm \zeta \bm s){\rm circ}(\bm \zeta \bm s)^T$, where $\bm s=(\bm{s}_1,\cdots, \bm{s}_k)$ and 

\[ {\rm circ}(\bm \zeta \mathbf{s}){\rm circ}(\bm \zeta \mathbf{s})^T:= 2\begin{bmatrix} I_n & \bm{s}_1^T & \ldots & \bm{s}_{k-1}^T \\ \bm{s}_1 & \bm{s}_1\bm{s}_1^T & \ldots & \bm{s}_1\bm{s}_{k-1}^T \\ \vdots & \vdots & \ddots & \vdots \\ \bm{s}_{k-1} & \bm{s}_{k-1}\bm{s}_1^T & \ldots & \bm{s}_{k-1}\bm{s}_{k-1}^T \end{bmatrix}. \]

\medskip

\noindent ($\mathcal{C}3$) The public key is $\mathbf{A}=((2(\mathbf{a}-\mathbf{b}_1)+q\mathbf{J})^T\, | \, 2\mathbf{A}_0 \, | \, 2\mathbf{I}_m )$ and the secret key is $\mathbf{s}=(1 \, | \, \mathbf{s}_1 \, | \, \mathbf{s}_2-\mathbf{b}_0)$, where $\mathbf{J}=(\bm{\zeta}^*,0,\cdots, 0) \in \mathcal{R}^m$, $\mathbf{b}^T=\mathbf{a}^T+\mathbf{A}_0\mathbf{s}_1^T+\mathbf{s}_2^T \bmod q$, $\mathbf{a} \in \mathcal{R}^m$ and $\mathbf{b}$ is decomposed to $(\mathbf{b}_0,\mathbf{b}_1)$ depending on $d$ as follows.

\medskip

\noindent (a) $d=0$: $\mathbf{b}_0=\bm{0}$ and $\mathbf{b}_1=\mathbf{b}$.

\medskip

\noindent (b) $d=1$: Let $\mathbf{b}=(b_0, \cdots, b_{nm-1})$, $\mathbf{b}_i=(b_{i0}, \cdots, b_{i(nm-1)})$ for $i=0,1$.  For $0 \le j \le nm-1$,

\noindent (b1) If $b_j\equiv0 \bmod 2$, then $b_{0j}=0$ and $b_{1j}=b_j$,

\medskip

\noindent (b2) If $b_j \equiv1 \bmod 2$, then 
\[b_{0j} = \left \{   \begin{array}{ll}
        b_j-1  & \mbox{if \, $b_j\equiv1 \bmod 4$} \\ b_j+1  & \mbox{if \, $b_j\equiv3 \bmod 4$}  \end{array}, \right. \]
\[b_{1j} = \left \{   \begin{array}{ll}
        1  & \mbox{if \, $b_j\equiv1 \bmod 4$} \\ -1  & \mbox{if \, $b_j\equiv3 \bmod 4$}  \end{array}. \right. \]

\medskip

\noindent ($\mathcal{C}4$) Parameters:  The updated parameters are given as follows.

\begin{table} [h]
    \centering
    \caption{The parameters of the Module G+G signatures scheme} \label{TbParam_ex}
    \begin{tabular}{|c|c|c|c|} \hline
  Sec        & $120$     &  $180$   &  $256$  \\ \hline
  $n$        & $256$     &  $256$   &  $256$  \\ \hline
  $q$        & $64513$   &  $50177$ &  $202753$  \\ \hline
  $S$        & $82.74$   &  $90.65$ &  $79.75$  \\ \hline
  Keygen Acceptance Rate        & $0.5$   &  $0.5$ &  $0.5$  \\ \hline
  $\sigma_u$        & $14.22$   &  $14.22$ &  $14.22$   \\ \hline
  $\sigma$        & $664.18$   &  $727.68$ &  $640.14$   \\ \hline
  $\gamma$     & $31972.19$      &  $39405.92$     &  $38437.36$    \\ \hline 
  $(m,k-m)$     & $(3,4)$      &  $(4,5)$     &  $(4,7)$    \\ \hline   
  $\eta$     & $1$      &  $1$     &  $1$    \\ \hline 
  $d$     & $1$      &  $1$     &  $0$    \\ \hline    
    \end{tabular}    
\end{table}

\medskip

We would like to comment on the revised module G+G signature scheme concerning the four changes above.  We will show that the signature is secure against the ratio attack. But the parameters given are not practical due to the following reasons: (1) the probability of the covariance matrix being positive definite is very small, (2) the generated signature is not a valid signature and vulnerable to forgery attack. We will discuss these in detail as follows.

\medskip

\noindent {\bf (A) The Module G+G Signature Is Secure Against The Ratio Attack}

\medskip
%It is noted that $\bm c = (c_0, \ldots, c_{\frac{n}{2}-1}, 0, \ldots, 0)$, where $c_j \in \{0,1\}$ for $j=0, \ldots, \frac{n}{2}-1$. 
%For $\bm a = \sum_{j=0}^{n-1}a_j x^j \in \mathcal{R}$, we sometimes write it as $\bm a = (a_0, a_1, \ldots, a_{n-1})$ for simplicity.
It is noted that $\bm c$ is sampled from $\{\bm c= (c_0, \ldots, c_{\frac{n}{2}-1}, 0, \ldots, 0) \in \mathcal{R}\footnote{For $\bm a = \sum_{j=0}^{n-1}a_j x^j \in \mathcal{R}$, we sometimes write it as $\bm a = (a_0, a_1, \ldots, a_{n-1})$ for simplicity.} \mid c_j \in \{0,1\} \text{ for } j=0, \ldots, \frac{n}{2}-1\}$.
Thus, $\mathbb{E}(C_j)=\frac{1}{2}$ for $j=0, \ldots, \frac{n}{2}-1$ and $\mathbb{E}(C_j)=0$ for $j=\frac{n}{2}, \ldots, n-1$. 

\begin{lem} \label{polymult}
   Let $\bm u=(u_0,\ldots, u_{n-1})$, then 

\noindent {\rm (i)}  $\bm \zeta^* \bm u=(u_0+u_{\frac{n}{2}}, \ldots, u_{\frac{n}{2}-1}+u_{n-1}, \ -u_0+u_{\frac{n}{2}-1}, \ldots, -u_{\frac{n}{2}-1}+u_{n-1})$.

\noindent {\rm (ii)}  $\bm \zeta \bm u=(u_0-u_{\frac{n}{2}}, \ldots, u_{\frac{n}{2}-1}-u_{n-1}, \ u_0+u_{\frac{n}{2}-1}, \ldots, u_{\frac{n}{2}-1}+u_{n-1})$. 
\end{lem}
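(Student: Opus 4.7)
The plan is to compute the two products directly in the ring $\mathcal{R}=\mathbb{Z}[x]/(x^n+1)$ using the definitions $\bm\zeta^* = 1 - x^{n/2}$, $\bm\zeta = 1 + x^{n/2}$, and $\bm u = \sum_{i=0}^{n-1} u_i x^i$. The one nontrivial ingredient is the ring relation $x^n \equiv -1$, which forces $x^{n/2+i} = -x^{i-n/2}$ whenever $i \ge n/2$.

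For part (i), I would write $\bm\zeta^* \bm u = \bm u - x^{n/2}\bm u$ and expand the second term as $\sum_{i=0}^{n-1} u_i x^{i+n/2}$. This sum splits cleanly into two ranges. For $0 \le i < n/2$, the exponent $i+n/2$ already lies in $\{0,\ldots,n-1\}$ and no reduction is needed; after re-indexing by $j = i+n/2$, this contributes $-u_{j-n/2} x^j$ for $n/2 \le j < n$. For $n/2 \le i < n$, the ring relation gives $x^{i+n/2} = -x^{i-n/2}$, so re-indexing by $j = i-n/2$ contributes $+u_{j+n/2} x^j$ for $0 \le j < n/2$. Adding back the contribution $u_j x^j$ from $\bm u$ itself, the coefficient of $x^j$ comes out to $u_j + u_{j+n/2}$ for $0 \le j < n/2$ and to $u_j - u_{j-n/2}$ for $n/2 \le j < n$, which yields the claimed tuple for $\bm\zeta^*\bm u$.

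Part (ii) is handled by exactly the same procedure applied to $\bm\zeta \bm u = \bm u + x^{n/2}\bm u$. Every sign arising from the $+x^{n/2}$ factor is flipped relative to part (i), so the coefficient of $x^j$ becomes $u_j - u_{j+n/2}$ for $0 \le j < n/2$ and $u_j + u_{j-n/2}$ for $n/2 \le j < n$, matching the statement.

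The only real obstacle is careful index bookkeeping: one has to keep track of which range of $i$ contributes to which range of the output index $j$, and remember that the sign flip from $x^n \equiv -1$ only occurs on the half of the sum with $i \ge n/2$. Once the two sub-sums are separated at the outset, the remainder is a routine re-indexing.
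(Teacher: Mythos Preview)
Your proposal is correct and is exactly the ``straightforward polynomial multiplication in $\mathcal{R}$'' that the paper invokes as its entire proof; you have simply written out the index bookkeeping that the paper leaves implicit. There is nothing to add or change.
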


\begin{proof}
  %By polynomial multiplication, it is easy to obtain (i) and (ii).
  Both (i) and (ii) are obtained by straightforward polynomial multiplications in $\mathcal{R}$.
\end{proof}

\begin{cor} \label{zeta*_c}
   Let $\bm c=(c_0,\ldots, c_{\frac{n}{2}-1},0, \ldots, 0)$, then $-\bm \zeta^* \bm c=(-c_0, \ldots, -c_{\frac{n}{2}-1}, c_0, \ldots, c_{\frac{n}{2}-1})$.
\end{cor}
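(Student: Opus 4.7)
The plan is to apply Lemma \ref{polymult}(i) directly, substituting $\bm c$ for $\bm u$. The lemma provides an explicit coefficient formula for $\bm \zeta^* \bm u$, in which the $j$-th coordinate for $0 \le j \le \frac{n}{2}-1$ equals $u_j + u_{j+n/2}$, and the $(j+n/2)$-th coordinate equals $-u_j + u_{j+n/2}$.

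First, I would plug in $\bm u = \bm c = (c_0, \ldots, c_{n/2-1}, 0, \ldots, 0)$. Because the upper half of the coefficient vector of $\bm c$ vanishes, i.e. $c_{n/2} = c_{n/2+1} = \cdots = c_{n-1} = 0$, the formula collapses substantially: for $0 \le j \le \frac{n}{2}-1$, the lower-half coordinate reduces to $c_j + 0 = c_j$, while the upper-half coordinate reduces to $-c_j + 0 = -c_j$. Thus $\bm \zeta^* \bm c = (c_0, \ldots, c_{n/2-1}, -c_0, \ldots, -c_{n/2-1})$.

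Finally, negating componentwise yields $-\bm \zeta^* \bm c = (-c_0, \ldots, -c_{n/2-1}, c_0, \ldots, c_{n/2-1})$, which is exactly the claim. There is essentially no obstacle here: the corollary is a direct specialization of Lemma \ref{polymult}(i) to the support pattern of $\bm c$, and the proof amounts to substituting zeros into the stated formula.
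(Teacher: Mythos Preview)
Your proposal is correct and matches the paper's intended approach: the corollary is stated without proof, as an immediate specialization of Lemma~\ref{polymult}(i), and your argument---substituting $\bm c$ for $\bm u$ and using that the upper half of $\bm c$ vanishes---is exactly that specialization.
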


Let $\psi_j=(\bm \zeta \bm u)_j$ be the $i$-th coordinate of $\bm \zeta \bm u$ and denote by $\Psi_j$ the distribution of $\psi_j$, for $j=0,\ldots, n-1$.

\begin{lem}\label{LemEV_ex}
    Suppose $\bm c \leftarrow \mathcal{U}(\mathcal{C})$,  $\bm u \leftarrow \mathcal{D}_{\mathcal{R},\sigma_u^2 I_n,-\bm \zeta^* \bm c/2}$, $\mathbf{y} \leftarrow  \mathcal{D}_{\mathcal{R}^k, \mathbf{\Sigma}(\mathbf{s},\sigma, \sigma_u) , \bm 0}$ and $\| \mathbf{s} \|_{\infty} \le \eta$ . Then, for $0 \le j \le n-1$ and $1 \le i \le k-1$,
   
\noindent {\rm (i)} $\mathbb{E}(U_j) = \left \{   \begin{array}{ll}
        -\frac12 \mathbb{E}(C_j)  & \mbox{if \, $0 \le j \le \frac{n}{2}-1$} \\ \frac12 \mathbb{E}(C_{j-\frac{n}{2}}) & \mbox{if \, $\frac{n}{2} \le j \le n-1$}  \end{array}. \right. $
       
\noindent {\rm (ii)} $\mathbb{E}(\Psi_j + C_j) = 0$,
       
\noindent {\rm (iii)} $\mathbb{V}(\Psi_j + C_j) = 2 \sigma_u^2$,

\noindent {\rm (iv)} $\mathbb{V}(Z_{0,j}) = \sigma^2$, 
       
\noindent {\rm (v)} $\mathbb{V}(Z_{i,j}) = \sigma^2.$ % +\sigma_u^2 \|\bm s_i \|^2$.   
\end{lem}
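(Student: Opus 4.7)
The plan is to mirror the proof of Lemma \ref{LemEV}, modifying each step to account for (a) the shifted center $-\bm{\zeta}^*\bm{c}/2$ of $\bm{u}$, and (b) the multiplication of $\bm{u}$ by $\bm{\zeta}$ inside the signature. The key algebraic identities that keep the computation clean are
\[ \bm{\zeta}\bm{\zeta}^* = (1+x^{n/2})(1-x^{n/2}) = 1-x^n = 2 \quad \text{in } \mathcal{R}, \]
together with the skew-circulant identity ${\rm circ}(\bm{\zeta})^T = {\rm circ}(\bm{\zeta}^*)$, which together will force the conditional distribution of $\bm{\zeta}\bm{u}+\bm{c}$ given $\bm{c}$ to be a simple isotropic Gaussian.

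Part (i) is read off immediately from the definition of $\mathcal{D}_{\mathcal{R},\sigma_u^2 I_n,-\bm{\zeta}^*\bm{c}/2}$ combined with Corollary \ref{zeta*_c}: conditional on $\bm{c}$, $\mathbb{E}(U_j\mid\bm{c})=-c_j/2$ for $0\le j\le n/2-1$ and $\mathbb{E}(U_j\mid\bm{c})=c_{j-n/2}/2$ for $n/2\le j\le n-1$, and the tower property then yields the stated formulas.

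For (ii) and (iii), rather than arguing coordinate-wise I would work with the whole vector at once. Conditional on $\bm{c}$, $\bm{u}$ is multivariate Gaussian with mean $-\bm{\zeta}^*\bm{c}/2$ and covariance $\sigma_u^2 I_n$, so multiplication on the left by $\bm{\zeta}$ (a linear map with matrix ${\rm circ}(\bm{\zeta})$) gives that $\bm{\zeta}\bm{u}$ has conditional mean $\bm{\zeta}\cdot(-\bm{\zeta}^*\bm{c}/2)=-\bm{c}$ and conditional covariance $\sigma_u^2\,{\rm circ}(\bm{\zeta})\,{\rm circ}(\bm{\zeta})^T = \sigma_u^2\,{\rm circ}(\bm{\zeta}\bm{\zeta}^*) = 2\sigma_u^2 I_n$. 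Hence $\bm{\zeta}\bm{u}+\bm{c}\mid\bm{c}\sim \mathcal{N}(\bm{0},2\sigma_u^2 I_n)$; since this conditional law does not depend on $\bm{c}$, (ii) and (iii) follow unconditionally by reading off the $j$-th marginal via Lemma \ref{LemMultivariate}(ii).

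Parts (iv) and (v) then combine the above with the independence of $\mathbf{y}$ from $(\bm{u},\bm{c})$. By Lemma \ref{LemMultivariate}(ii) applied to $\mathbf{\Sigma}(\mathbf{s},\sigma,\sigma_u)$, and using that the diagonal entries of the $(i,i)$-block of $\sigma_u^2\,{\rm circ}(\bm{\zeta}\mathbf{s}){\rm circ}(\bm{\zeta}\mathbf{s})^T$ are $2\sigma_u^2\|\bm{s}_i\|^2$ (and $2\sigma_u^2$ for $\bm{s}_0=\bm{1}$), one obtains $\mathbb{V}(Y_{i,j})=\sigma^2-2\sigma_u^2\|\bm{s}_i\|^2$. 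For (iv), $z_{0,j}=y_{0,j}+\psi_j+c_j$ and independence give $\mathbb{V}(Z_{0,j})=(\sigma^2-2\sigma_u^2)+2\sigma_u^2=\sigma^2$. For (v), writing $z_{i,j}=y_{i,j}+[(\bm{\zeta}\bm{u}+\bm{c})\bm{s}_i]_j$, the added term has variance equal to the $(j,j)$-entry of ${\rm circ}(\bm{s}_i)\cdot 2\sigma_u^2 I_n\cdot {\rm circ}(\bm{s}_i)^T$, which equals $2\sigma_u^2\|\bm{s}_i\|^2$ since every row of ${\rm circ}(\bm{s}_i)$ has squared norm $\|\bm{s}_i\|^2$; adding this to $\mathbb{V}(Y_{i,j})$ yields $\sigma^2$. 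The only real obstacle is the skew-circulant bookkeeping---verifying ${\rm circ}(\bm{\zeta})^T={\rm circ}(\bm{\zeta}^*)$ and the multiplicative compatibility ${\rm circ}(\bm{a}){\rm circ}(\bm{b})={\rm circ}(\bm{a}\bm{b})$ directly from Definition \ref{cirmat}---but this is routine and the identity $\bm{\zeta}\bm{\zeta}^*=2$ does all the substantive work.
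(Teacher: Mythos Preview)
Your proof is correct. Parts (i), (iv), and (v) match the paper's argument essentially line for line. The genuine difference is in (ii)--(iii): the paper proceeds coordinate-by-coordinate, writing out $\psi_j$ explicitly via Lemma~\ref{polymult}(ii) and then doing a case split on $c_j\in\{0,1\}$ to compute each of $\mathbb{E}(\Psi_j^2)$, $\mathbb{E}(\Psi_jC_j)$, $\mathbb{E}(C_j^2)$ separately before reassembling $\mathbb{V}(\Psi_j+C_j)$. You instead work with the whole vector, using the identity $\bm{\zeta}\bm{\zeta}^*=2$ together with ${\rm circ}(\bm{\zeta})^T={\rm circ}(\bm{\zeta}^*)$ to see at once that $\bm{\zeta}\bm{u}+\bm{c}\mid\bm{c}\sim\mathcal{N}(\bm{0},2\sigma_u^2 I_n)$ independently of $\bm{c}$. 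Your route is shorter and more conceptual---the factor $2$ in $2\sigma_u^2$ is visibly the constant $\bm{\zeta}\bm{\zeta}^*$, and you get the joint independence of the $W_j$'s (needed in (v)) for free---whereas the paper's route stays closer to the bare Definition~\ref{cirmat} and Lemma~\ref{polymult} without invoking the ring-homomorphism property of skew-circulants, at the price of a longer case analysis.
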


\begin{proof}

\noindent {\rm (i)} It is clear from Corollary \ref{zeta*_c} and the definition of $\mathcal{D}_{\mathcal{R},\sigma_u^2 I_n,-\bm \zeta^* \bm c/2}$.

\noindent {\rm (ii)} By Lemma \ref{polymult} (ii), $$\psi_j = \left \{   \begin{array}{ll}  u_j-u_{\frac{n}{2}+j}  & \mbox{if \, $0 \le j \le \frac{n}{2}-1$} \\ u_{j-\frac{n}{2}}+u_{j} & \mbox{if \, $\frac{n}{2} \le j \le n-1$}  \end{array}. \right.$$ By part (i), we have 

$\mathbb{E}(\Psi_j) = \left \{   \begin{array}{ll}
        \mathbb{E}(U_j-U_{\frac{n}{2}+j})=-\mathbb{E}(C_j)  & \mbox{if \, $0 \le j \le \frac{n}{2}-1$} \\ \mathbb{E}(U_{j-\frac{n}{2}}+U_{j})=0 & \mbox{if \, $\frac{n}{2} \le j \le n-1$}  \end{array}. \right.$  Thus, $\mathbb{E}(\Psi_j + C_j) = \mathbb{E}(\Psi_j) + \mathbb{E}( C_j)=0$.
 
\medskip
        
\noindent (iii) Since $\mathbb{E}((\Psi_j+C_j)^2)=\mathbb{E}(\Psi_j^2)+\mathbb{E}(C_j^2)+2\mathbb{E}(\Psi_jC_j)$,  
\begin{equation}
\psi_j = \left \{   \begin{array}{ll}  u_j-u_{\frac{n}{2}+j}  & \mbox{if \, $0 \le j \le \frac{n}{2}-1$} \\ u_{j-\frac{n}{2}}+u_{j} & \mbox{if \, $\frac{n}{2} \le j \le n-1$}  \end{array} \right.  \label{eqn2}
\end{equation}
 and $\bm u$ is sampled from  the distribution $\mathcal{D}_{\mathcal{R},\sigma_u^2 I_n,-\bm \zeta^* \bm c/2}$ with variance $\sigma_u^2$,  
then, $\mathbb{E}(\Psi_j^2)=\left \{  \begin{array}{ll} \mathbb{E}(U_j^2)-2\mathbb{E}(U_jU_{\frac{n}{2}+j})+\mathbb{E}(U_{\frac{n}{2}+j}^2)  & \mbox{if \, $0 \le j \le \frac{n}{2}-1$} \\ \mathbb{E}(U_{j-\frac{n}{2}}^2)+2\mathbb{E}(U_jU_{j-\frac{n}{2}})+\mathbb{E}(U_j^2)  & \mbox{if \, $\frac{n}{2} \le j \le n-1$}  \end{array}. \right.$ We consider two cases for $c_j$ as follows.

\noindent (a) $c_j=0$: By Corollary \ref{zeta*_c} and Lemma \ref{LemMultivariate} (ii), $u_j \leftarrow \mathcal{N}(0,\sigma_u^2)$. Thus, $\mathbb{E}(U_j)=0$, $\mathbb{E}(U_j^2)=\sigma_u^2+(\mathbb{E}(U_j))^2=\sigma_u^2$, $\mathbb{E}(\Psi_j C_j)=0$ for $0 \le j \le n-1$ and 
\begin{eqnarray*}
 \mathbb{E}(U_jU_{\frac{n}{2}-1})=0 \;\; \mbox{if \, $0 \le j \le \frac{n}{2}-1$,} \\
 \mathbb{E}(U_jU_{j-\frac{n}{2}})=0 \; \; \mbox{if \, $\frac{n}{2} \le j \le n-1$.} 
\end{eqnarray*}
Therefore, $\mathbb{E}(\Psi_j^2)=2\sigma_u^2$.

\noindent (b) $c_j=1$: By Corollary \ref{zeta*_c} and Lemma \ref{LemMultivariate} (ii), 

$u_j \leftarrow \left \{  \begin{array}{ll} \mathcal{N}(-\frac{1}{2},\sigma_u^2)  & \mbox{if \, $0 \le j \le \frac{n}{2}-1$} \\ \mathcal{N}(\frac{1}{2},\sigma_u^2) & \mbox{if \, $\frac{n}{2} \le j \le n-1$}  \end{array} \right.$. Thus, 

$\mathbb{E}(U_j)= \left \{  \begin{array}{ll} - \frac{1}{2}  & \mbox{if \, $0 \le j \le \frac{n}{2}-1$} \\ \frac{1}{2} & \mbox{if \, $\frac{n}{2} \le j \le n-1$}  \end{array}, \right.$
\begin{eqnarray*}
 \mathbb{E}(U_jU_{\frac{n}{2}-1})=(-\frac{1}{2})(\frac{1}{2})=-\frac{1}{4} \;\; \mbox{if \, $0 \le j \le \frac{n}{2}-1$,} \\
 \mathbb{E}(U_jU_{j-\frac{n}{2}})=(\frac{1}{2})(-\frac{1}{2})=-\frac{1}{4} \; \; \mbox{if \, $\frac{n}{2} \le j \le n-1$,} 
\end{eqnarray*}
$\mathbb{E}(U_j^2)= \left \{   \begin{array}{ll}  \sigma_u^2+(\mathbb{E}(U_j))^2=\sigma_u^2+\frac{1}{4}  & \mbox{if \, $0 \le j \le \frac{n}{2}-1$} \\ \sigma_u^2+(\mathbb{E}(U_j))^2=\sigma_u^2+\frac{1}{4} & \mbox{if \, $\frac{n}{2} \le j \le n-1$}  \end{array}. \right.$
Then, 
\begin{align*}
 \mathbb{E}(\Psi_j^2)
&=\left \{  \begin{array}{ll} \sigma_u^2+\frac{1}{4}-2(-\frac{1}{4})+\sigma_u^2+\frac{1}{4}  & \mbox{if \, $0 \le j \le \frac{n}{2}-1$} \\ \sigma_u^2+\frac{1}{4}+2(-\frac{1}{4})+\sigma_u^2+\frac{1}{4}  & \mbox{if \, $\frac{n}{2} \le j \le n-1$}  \end{array} \right. \\
&=\left \{  \begin{array}{ll} 2\sigma_u^2+1  & \mbox{if \, $0 \le j \le \frac{n}{2}-1$} \\ 2\sigma_u^2 & \mbox{if \, $\frac{n}{2} \le j \le n-1$}  \end{array}. \right. 
\end{align*}
By equation \eqref{eqn2},

$\mathbb{E}(\Psi_jC_j)= \left \{  \begin{array}{ll} - \frac{1}{2}- \frac{1}{2}=-1  & \mbox{if \, $0 \le j \le \frac{n}{2}-1$} \\ \frac{1}{2}- \frac{1}{2}=0 & \mbox{if \, $\frac{n}{2} \le j \le n-1$}  \end{array}. \right.$

\medskip
\medskip

%Combining (a) and (b), we have $\mathbb{E}(\Psi_j^2)=\left \{  \begin{array}{ll} \frac{1}{2}(2\sigma_u^2+2\sigma_u^2+1)=2\sigma_u^2+\frac{1}{2} & \mbox{if \, $0 \le j \le \frac{n}{2}-1$} \\ \frac{1}{2}(2\sigma_u^2+2\sigma_u^2)=2\sigma_u^2  & \mbox{if \, $\frac{n}{2} \le j \le n-1$}  \end{array}, \right.$
Combining (a) and (b), we have 
\begin{align*}\mathbb{E}(\Psi_j^2)& =\begin{cases} \frac{1}{2}(2\sigma_u^2+2\sigma_u^2+1)=2\sigma_u^2+\frac{1}{2} & \mbox{if  $0 \le j \le \frac{n}{2}-1$} \\ 
\frac{1}{2}(2\sigma_u^2+2\sigma_u^2)=2\sigma_u^2  & \mbox{if  $\frac{n}{2} \le j \le n-1$}  \end{cases},\end{align*}

$\mathbb{E}(\Psi_jC_j)= \left \{  \begin{array}{ll}  \frac{1}{2}(-1)=-\frac{1}{2}  & \mbox{if \, $0 \le j \le \frac{n}{2}-1$} \\ 0 & \mbox{if \, $\frac{n}{2} \le j \le n-1$}  \end{array}. \right.$

Hence, 
% $\mathbb{V}(\Psi_j+C_j) $
% \begin{eqnarray*}
% &=& \mathbb{E}((\Psi_j+C_j)^2)-(\mathbb{E}(\Psi_j+C_j))^2 \\
% &=& \mathbb{E}(\Psi_j^2)+\mathbb{E}(C_j^2)+2\mathbb{E}(\Psi_jC_j) \\
% &=& \left \{  \begin{array}{ll} 2\sigma_u^2+\frac{1}{2}+\frac{1}{2}+ 2(-\frac{1}{2})=2\sigma_u^2 & \mbox{if \, $0 \le j \le \frac{n}{2}-1$} \\ 2\sigma_u^2+0+0=2\sigma_u^2  & \mbox{if \, $\frac{n}{2} \le j \le n-1$}  \end{array}. \right.
% \end{eqnarray*}
\begin{align*}\mathbb{V}(\Psi_j+C_j) 
&= \mathbb{E}((\Psi_j+C_j)^2)-(\mathbb{E}(\Psi_j+C_j))^2 \\
&= \mathbb{E}(\Psi_j^2)+\mathbb{E}(C_j^2)+2\mathbb{E}(\Psi_jC_j) \\
&= \left \{  \begin{array}{ll} 2\sigma_u^2+\frac{1}{2}+\frac{1}{2}+ 2(-\frac{1}{2}) & \mbox{if $0 \le j \le \frac{n}{2}-1$} \\ 2\sigma_u^2+0+0  & \mbox{if  $\frac{n}{2} \le j \le n-1$}  \end{array} \right. \\
&= \left \{  \begin{array}{ll} 2\sigma_u^2& \mbox{if $0 \le j \le \frac{n}{2}-1$} \\ 2\sigma_u^2  & \mbox{if  $\frac{n}{2} \le j \le n-1$}  \end{array}. \right.
\end{align*}

\noindent (iv) 
Let $0 \le j \le n-1$. Recall that $\mathbf{y} \leftarrow \mathcal{D}_{\mathcal{R}^k, \mathbf{\Sigma}(\mathbf{s},\sigma, \sigma_u) , \bm 0}$ and $\bm s_0 = \bm 1$. By Lemma \ref{LemMultivariate} (ii),  $Y_{0,j}$ is the normal distribution $\mathcal{N}(0, \sigma^2-2\sigma_u^2 \|\bm s_0\|^2) = \mathcal{N}(0, \sigma^2-2\sigma_u^2)$. Thus, $\mathbb{V}(Z_{0,j})=\mathbb{V}(Y_{0,j}+(\Psi_j+C_j))=\mathbb{V}(Y_{0,j})+\mathbb{V}(\Psi_j+C_j) = \sigma^2 - 2\sigma_u^2 + 2 \sigma_u^2 = \sigma^2$.

\noindent (v) For $1 \le i \le k-1$, $\mathbf{y} \leftarrow \mathcal{D}_{\mathcal{R}^k, \mathbf{\Sigma}(\mathbf{s},\sigma, \sigma_u) , \bm 0}$, then by Lemma \ref{LemMultivariate} (ii), $Y_{i,j}$ is the normal distribution $\mathcal{N}(0, \sigma^2-2\sigma_u^2 \| \bm s_i\|^2)$ for $0 \le j \le n-1$. Thus, $\mathbb{V}(Y_{i,j})=\sigma^2-2\sigma_u^2 \| \bm s_i\|^2$. Letting $w_j=\psi_j+c_j$, we have $z_{i,j}=y_{i,j}+\sum_{l+m=j \bmod n} \varepsilon_{l,m} w_ls_{i,m}$. 

Hence 
\begin{align*}
   \mathbb{V}(Z_{i,j}) &= \mathbb{V}\left(Y_{i,j}+\sum_{l+m=j \bmod n} \varepsilon_{l,m} W_ls_{i,m}\right) \\
   &= \mathbb{V}(Y_{i,j})+\mathbb{V}(\sum_{l+m=j \bmod n} \varepsilon_{l,m} W_ls_{i,m}) \\
   &= \sigma^2-2\sigma_u^2 \|\bm s_i \|^2+ \sum_{l+m=j \bmod n} s_{i,m}^2 \mathbb{V}(W_l)  \\
   &=  \sigma^2-2\sigma_u^2\|\bm s_i \|^2 + 2\sigma_u^2 \|\bm s_i \|^2 \\
   &= \sigma^2. %+\sigma_u^2 \|\bm s_i \|^2.
\end{align*}
\end{proof}

We now compute $\mathbb{E}(Z_{i,j}/Z_{0,0})$ in this setting by following the proof of Theorem \ref{exp_ratio}. Let $W_j:= \Psi_j + C_j$ for $0 \le j \le n-1$.
In the proof of Theorem \ref{exp_ratio}, we have $\mathbb{E}(Z_{i,j}Z_{0,0})$ is equal to
\[\mathbb{E} {\Big (} {\Big (} Y_{i,j}+W_0s_{i,j} +\sum_{l+m=j \bmod n, l \ne 0} \varepsilon_{l,m} W_ls_{i,m} {\Big )} (Y_{0,0}+W_0) {\Big )}\]
in the computation of $\rho_{i,j}$. 
%Let $W_0=\Psi_0+C_0$. 
Then, we have
\[\mathbb{E}(Z_{i,j}Z_{0,0})=\mathbb{E}(Y_{i,j}Y_{0,0})+\mathbb{E}(W_0^2)s_{i,j}.\]
From the covariance matrix $\mathbf{\Sigma}(\mathbf{s},\sigma,\sigma_u)$, we have $\mathbb{E}(Y_{i,j}Y_{0,0})=-2\sigma_u^2 s_{i,j}$. By Lemma \ref{LemEV_ex} (iii), $\mathbb{E}(W_0^2)=2\sigma_u^2$. Therefore, $\mathbb{E}(Z_{i,j}Z_{0,0})=0$.
Following the proof of Theorem \ref{exp_ratio}, consequently, we have $\rho_{i,j}=0$ and so $\mathbb{E}(Z_{i,j}/Z_{0,0}) = 0$. So, we cannot deduce any information about the secret key $s_{i,j}$ from $\mathbb{E}(Z_{i,j}/Z_{0,0})$. Hence, the ratio attack cannot be applied in this case.

\medskip

\medskip

\noindent {\bf (B) Almost Impossible To Generate A Signature ($d=0$)}

\medskip

We performed a simulation, where we generate 50,000 $\bm{\zeta} \mathbf{s}$ for $d=0$ ($260$-bit security) and found that all of the covariance matrices $\mathbf{\Sigma}(\mathbf{s},\sigma, \sigma_u)$ are not positive definite. This indicates that the chance of generating $\mathbf{y}$ for producing a signature is extremely slim ($\ll 1/50000$) for the parameter given in the fourth column of Table \ref{TbParam_ex}.
We also found that the mean and standard deviation of $\sigma_1(\bm{\zeta} \mathbf{s})$ are $82.88$ and $4.20$ respectively.  From Table \ref{TbParam_ex}, $\frac{\sigma}{\sigma_u}=45.016$, which is less than $\sigma_1(\bm{\zeta} \mathbf{s})$ for all 50,000 values obtained in our simulation.    This also shows that $\sigma_1(\bm{\zeta} \mathbf{s}) < S=79.75$ cannot guarantee that the covariance matrix $\mathbf{\Sigma}(\mathbf{s},\sigma, \sigma_u)$ is positive definite.

\medskip

\medskip

\noindent {\bf (C) Generated Signatures Are Invalid ($d=1$) } 

\medskip

In $\mathcal{C}3$, for $d=1$, $\| \mathbf{s}_2-\mathbf{b}_0 \|_{\infty}$ is not small and is in fact close to $\frac{q-1}{2}$. Let $\bar{q}:=\frac{q-1}{2}$.  Then, the 2-norm of the signature $\mathbf{z}=\mathbf{y}+(\bm{\zeta u}+\bm{c})\mathbf{s}$ is 
$$\| \mathbf{z} \|_2 \ge \| \mathbf{z}_2 \|_2 \ge \| \mathbf{s}_2-\mathbf{b}_0 \|_2 \approx \sqrt{m\frac{n}{2}\frac{\bar{q}(\bar{q}+1)}{3}},$$ 
where $\mathbf{z}_2=\mathbf{y}_2+(\bm{\zeta u}+\bm{c})(\mathbf{s}_2-\mathbf{b}_0)$.  A signature
$\mathbf{z}$ is a valid signature if $\| \mathbf{z} \|_2 \le \gamma$. But from the Table \ref{TbParam_ex}, $\| \mathbf{z} \|_2 > 364940.63 > 31972.19 =\gamma$ and $\| \mathbf{z} \|_2 > 327754.79 > 39405.92 =\gamma$ for 120-bit and 180-bit security respectively. Hence, the signature is always invalid.

\medskip

Moreover, for $\| \mathbf{s} \|_{\infty}$ close to $\frac{q-1}{2}$, then the module G+G signature is  vulnerable to forgery attack.
From the public key, we get $2(\mathbf{a}-\mathbf{b}_1)+q\mathbf{J} \bmod 2q$. As $q\mathbf{J}$ is publicly known, we can obtain $2(\mathbf{a}-\mathbf{b}_1) \bmod 2q$ and so $(\mathbf{a}-\mathbf{b}_1) \bmod q$. Let $\bar{\mathbf{a}}:=(\mathbf{a}-\mathbf{b}_1) \bmod q$.
We may then forge a signature as follows.

Randomly choose $\bar{\mathbf{s}}_1 \leftarrow \chi_{\eta}^{k-m-1}$ and compute $\bar{\mathbf{s}}_2=\bar{\mathbf{a}}-\mathbf{A}_0\bar{\mathbf{s}}_1$. Thus, $\bar{\mathbf{s}}_2 \in [-\frac{q-1}{2}, \frac{q-1}{2} ]^{nm}$.  Then, generate a signature by following the original signature generation (refer to ${\sf Sign}(\mathbf{A},\mathbf{s}, \mu)$ in \cite{DPS23e}), using $\bar{\mathbf{s}}_1, \bar{\mathbf{s}}_2$ in place of $\mathbf{s}_1, \mathbf{s}_2$.  Thus, we may easily forge a signature in this case.

\medskip

\medskip

In view of the findings (A), (B), (C) above, we conclude that the parameters of the module G+G signature scheme does not work. 
Fixing it would result in much larger public key and signature sizes than the ones given in \cite{DPS23e}.

\section{Conclusion}\label{SectConclusion}

%Devevey et al. \cite{DPS23} proposed a lattice-based signature which is called the G+G convoluted Gaussian signature. It was proved to be secure in the quantum random oracle model. 
In this paper, we proposed ratio attack on the generic G+G convoluted Gaussian signature \cite{DPS23}.  We exploit the correlation among the signatures and take the ratio of these signatures which follows a Cauchy distribution. This enables us to find the formula relating the expected value of the ratio of two signatures and the secret key via the truncated Cauchy distribution.  We also proved a formula for computing the required number of signatures to successfully recover the secret key.  

In order to provide a proof-of-concept of the ratio attack on the G+G convoluted Gaussian signature, we implemented the ratio attack in SageMath on some scaled-down parameters of the G+G convoluted Gaussian signature. This demonstrates that the secret key can be completely recovered.  This also shows that the G+G convoluted Gaussian signature \cite{DPS23} is insecure if the parameters are not chosen correctly. Although the revised signature in the revised eprint paper \cite{DPS23e} is secure against the ratio attack, we found that either a valid signature cannot be produced or a signature can be forged easily for their given parameters in \cite{DPS23e}. 

%After we completed this paper, we found out that the authors update their concrete example in Fig. 4 and Table 1 in the revised eprint \cite{DPS23e}. We will comment the parameters of the   revised module G+G signature in the Appendix.

%{ Nevertheless, the ratio attack can also recover the secret key from the signature $\bm z = \bm y + (\bm \zeta \bm u + \bm c) \bm s$ with the corrected sampling of $\bm u$ as the means of the coordinates of $\bm \zeta \bm u + \bm c$ are all zero. By computing the variances of each coordinate of $(\bm \zeta \bm u + \bm c)$ and $z_{0,0}$, the ratio attack can still apply in this case.}

\vspace{0.5cm}

\end{document}